\newcommand{\norm}[1]{\left\|{#1}\right\|}
\newcommand{\N}{\mathbbmss{N}}
\newcommand{\C}{\mathbbmss{C}}
\newcommand{\set}[1]{\left\{{#1}\right\}}
\newtheorem{proposition}{Proposition}
\newtheorem{lemma}{Lemma}
\newtheorem{corollary}{Corollary}
\newtheorem{definition}{Definition}
\newtheorem{theorem}{Theorem}
\newtheorem{example}{Example}
\begin{document}
\title{Entanglement of Indistinguishable Particles and its Quantification}
\author{Florian Sokoli, Burkhard K\"ummerer\\Fachbereich Mathematik, Technische Universit\"at Darmstadt}
\maketitle
\begin{abstract}
We introduce geometric measures of entanglement for indistinguishable particles, which apply to mixed states, multipartite systems, and arbitrary dimensions. They are based on generalized (i.e., not necessarily finite) norms on the set of quantum states and lead to the first necessary and sufficient computational separability criterion in this general setting. The coherent approach developed in the paper allows us to compare, in particular, entanglement for fermionic and distinguishable particles: The entanglement measure for fermionic particles coincides with the corresponding entanglement measure for distinguishable particles up to a factor of $k!$ where $k$ is the number of particles involved. By this result the amount of entanglement emerging from fermi statistics alone is clearly separated from the overall amount of entanglement. Finally, our techniques are applied to entanglement related to Schmidt and Slater numbers.	
\end{abstract}

% ALTER ABSTRACT
% \begin{abstract}
% Entanglement of indistinguishable particles has gained an increasing amount of interest during the last decade. Quantifying and detecting this kind of entanglement still is a challenging task and many question are open. In this work we present geometrically motivated entanglement measures for multipartite mixed states of both fermionic and bosonic quantum systems in arbitrary dimensions. As a consequence, for the first time we obtain a necessary and sufficient computational separability criterion which can work in this most general setting. This goal is achieved by a generalization and simplification of Arveson's entanglement theory which allows a coherent approach to all kinds of entanglement at once. The measures under consideration can be interpreted as generalized (i.e. not necessarily finite) norms acting on the set of quantum states. A closer examination of the fermionic entanglement measure reveals that it coincides with a corresponding entanglement measure for distinguishable particles up to a factor of $k!$ where $k$ is the number of particles involved. This observation leads to two important conclusions: First, the way of treating fermionic particles as being distinguishable and studying their entanglement in this picture gains a mathematical justification and second the amount of entanglement emerging from fermi statistics alone is clearly separated from the overall amount of entanglement. Finally, we show how these results generalize to entanglement related to Schmidt and Slater numbers.
% \end{abstract}

\maketitle
\section{Introduction}
Entanglement of indistinguishable particles has gained an increasing amount of interest during the last decade. Due to the different nature of fermionic and bosonic quantum systems in comparison to distinguishable particles the conventional approach to entanglement needs to be modified considerably. In particular, entanglement of fermionic particles has been be considered as being ''unphysical'' and ''useless'' for some time. Meanwhile, the conceptional problem of defining entanglement of indistinguishable particles in an appropriate way has widely been solved \cite{Benatti,Eckert,Ghi02,Grabowski,Li,Schliemann,Tichy}.\\
\indent However, the possibilities for detecting and quantifying this kind of entanglement are still rather limited and many questions are open.\\
\indent Schliemann et al. presented the first characterization of entanglement for quantum states of indistinguishable particles. They studied entanglement of mixed states of bipartite quantum systems using a \emph{correlation measure}, also called \emph{Schliemann concurrence} \cite{Eckert,Schliemann}. A more general and widely used approach is to consider the reduced one-particle-state of a given multipartite state of indistinguishable particles. In this context several entanglement measures have been proposed such as the \emph{purity} or the \emph{von Neumann entropy} of the reduced state \cite{Balachandran, Ghi04, Ghi05, Iemini2, Paskauskas, Plastino, Zander}. It can shown that these techniques are sufficient in order to characterize entanglement of pure fermionic states \cite{Plastino}. Moreover, measures based on a \emph{shifted negativity} \cite{Iemini2} and \emph{R\'enyi entropies} \cite{Zander} have been used in this context as well. L\'evay and Vrana \cite{Levay} have introduced an algebraic measure for tripartite fermionic entanglement as well as a separability criterion for pure multipartite states based on methods from algebraic geometry. Finally, Iemini et al. \cite{Iemini1} have considered \emph{entanglement witnesses} in order to obtain an entanglement measure called \emph{generalized robustness}. However, up to now there is no necessary and sufficient separability criterion for indistinguishable particles which works for arbitrary mixed and multipartite states in arbitrary dimensions. The present paper contributes to this problem by introducing measures for fermionic and bosonic entanglement in this general setting which lead to necessary and sufficient separability criteria.\\
\\
Comparing the various definitions of entanglement for distinguishable, bosonic and fermionic particles one realizes a great deal of common mathematical structure which indicates the possibility of a more abstract and unified treatment. In a groundbreaking work \cite{Arveson} Arveson provided such a theory of generalized entanglement which allows a coherent approach to the detection and quantification of all different kinds of quantum entanglement at once. The general idea is to use \emph{Minkowski functionals} with respect to convex balanced sets as entanglement measures. It turns out that they can be interpreted as \emph{generalized norms} in the sense that all properties of a norm are satisfied except that they might be infinite for infinite dimensional Hilbert spaces. Although these concepts are already very general they still need to be adapted in order to control entanglement of quantum states restricted to linear subspaces. We therefore develop a slight generalization of Arveson's concepts which, in particular, leads to a simper and easier-to-handle theory. 
\\
\indent A widely used treatment of entanglement of fermionic particles is to consider them as being distinguishable and to study their entanglement in the ''classical'' way as for distinguishable particles. Our treatment of entanglement allows us to provide a mathematical justification of this approach. We show that the entanglement measure for fermions coincides with the corresponding entanglement measure for distinguishable particles up to a state-independent factor of $k!$ where $k$ equals the number of particles involved. Moreover, the amount of ''trivial'' entanglement, i.e. entanglement due to antisymmetrization is clearly separated from ''genuine'' fermionic entanglement.\\
\\
This paper is organized as follows. In Section \ref{SN} we describe our notational and terminological conventions. Section \ref{SA} gives an introduction to relevant aspects of Arveson's concepts with a focus on its geometrical background. Afterwards we describe our generalization of his methods which is used in the subsequent proof of the main result of Section \ref{SA} on controling entanglement of states that are restricted to linear subspaces. In Section \ref{SB} we use these abstract results in order to define measures for bosonic and fermionic entanglement. Moreover, we establish the aforementioned relation between the entanglement measures for fermionic and distinguishable particles. We extend these results in Section \ref{SC} by introducing entanglement measures related to Schmidt and Slater numbers. For bipartite systems we derive similar quantitative relations between the measures for distinguishable and fermionic particles as in Section \ref{SB}. Moreover, we generalize a result of Johnston \cite{Johnston} concerning the quantification of Schmidt number entanglement which was formulated for bipartite finite dimensional quantum systems.

\section{Notation and Terminology}\label{SN}
Throughout this paper $\mathcal{H}$ will always denote a (possibly infinite dimensional) complex Hilbert space. The scalar-product on $\mathcal{H}$ is denoted by $\langle\cdot,\cdot\rangle$ and assumed to be linear in the first and anti-linear in the second component. General vectors of $\mathcal{H}$ are typically denoted by $\xi,\eta$, and $\zeta$ whereas for vectors of orthogonal systems we use $e$ and $f$. The algebra of bounded linear operators on $\mathcal{H}$ is denoted by $\mathcal{B}(\mathcal{H})$. A continuous linear functional $\varphi$ on $\mathcal{B}(\mathcal{H})$ is called \emph{normal} if there is a trace-class operator $\Phi$ on $\mathcal{H}$ such that
\[\varphi(x)=\mbox{tr}(\Phi x)\quad\mbox{for all }x\in\mathcal{B}(\mathcal{H})\,.\]
The set of normal linear functionals on $\mathcal{B}(\mathcal{H})$ is denoted by $\mathcal{B}(\mathcal{H})_{\ast}$. If $\dim\,\mathcal{H}=n<\infty$ every linear functional on $\mathcal{B}(\mathcal{H})$ is normal and the algebra $\mathcal{B}(\mathcal{H})$ can be identified with the set of complex $n\times n$-matrices which we denote by $M_n(\C)$. A \emph{state} on $\mathcal{B}(\mathcal{H})$ is a positive (hence continuous) linear functional of norm one. For a normal state $\varphi$ on $\mathcal{B}(\mathcal{H})$ its associated trace-class operator $\rho$ is referred to as \emph{density operator}. In this paper we use linear functionals instead of density operators since it leads to simpler notations.\\
\indent For vectors $\xi,\eta\in\mathcal{H}$ the \emph{rank-one-operator} $t_{\xi,\eta}$ is the linear operator on $\mathcal{H}$ defined by
$$t_{\xi,\eta}\zeta:=\langle\zeta,\eta\rangle\xi\quad\mbox{for }\zeta\in\mathcal{H}$$
and the corresponding (normal) \emph{vector functional} $\omega_{\xi,\eta}:\mathcal{B}(\mathcal{H})\rightarrow\C$ induced by $\xi,\eta$ is given by
$$\omega_{\xi,\eta}(x):=\langle x\xi,\eta\rangle=\mbox{tr}(t_{\xi,\eta}x)\quad\mbox{for }x\in\mathcal{B}(\mathcal{H})\,.$$
Occasionally we write $t_{\xi}$ and $\omega_{\xi}$ for $t_{\xi,\xi}$ and $\omega_{\xi,\xi}$ respectively.\\
\indent The unit ball and the unit sphere of a complex normed space $X$ are denoted by $B_{1,X}$ and $S_{1,X}$ respectively. A set $M\subset X$ is called \emph{balanced} if for all $x\in M$ and all numbers $\lambda\in\C$ with $|\lambda|=1$ we have $\lambda x\in M$. The convex hull of $M$ is denoted by $\mbox{co}(M)$ and we write $\overline{\mbox{co}}^{\norm{\cdot}}(M)$ for its norm closure. More generally, for an arbitrary topology $\mathcal{T}$ on $X$ the $\mathcal{T}$-closure of $\mbox{co}(M)$ is denoted by $\overline{\mbox{co}}^{\mathcal{T}}(M)$.\\
\indent We say that a set of continuous linear functionals $N$ on $X$ is \emph{separating} or \emph{separates points of $X$} if for every $x\in X$ the relation
$$\varphi(x)=0\quad\mbox{for all }\varphi\in N$$
already implies $x=0$. In particular, a subset $N\subset \mathcal{H}$ is separating if for every $\xi\in\mathcal{H}$ the condition
$$\langle\xi,\eta\rangle=0\quad\mbox{for all }\eta\in N$$
implies $\xi=0.$

\section{Abstract Theory}\label{SA}
In this section we give a brief introduction to Arveson's entanglement theory and generalize it according to our needs. Theorem \ref{TA}, the main result of this section, prepares grounds for comparing entanglement of distinguishable and fermionic particles in the sections \ref{SB} and \ref{SC}.
\begin{definition}\label{D4}
Let $V\subset \mathcal{H}$ be a balanced set of unit vectors that separates points of $\mathcal{H}$ and
$$K:=\overline{{\mathrm{co}}}^{\norm{\cdot}}\{\omega_{\xi,\eta}:\,\xi,\eta\in V\}\subset\mathcal{B}(\mathcal{H})_{\ast}\,.$$
Moreover, we define functions on $\mathcal{B}(\mathcal{H})$ and $\mathcal{B}(\mathcal{H})_{\ast}$ by
$$\norm{x}_{K}:=\sup\set{|\psi(x)|:\,\psi\in K}\quad \mbox{for }x\in\mathcal{B}(\mathcal{H})$$
and
$$q_K(\psi):=\inf\set{\alpha\geq 0:\,\psi\in\alpha K}\quad\mbox{for }\psi\in\mathcal{B}(\mathcal{H})_{\ast}$$
respectively. For the latter we use the convention that $q_K(\psi)=\infty$ whenever the set $\set{\alpha\geq 0:\,\psi\in\alpha K}$ is empty.
\end{definition}
\noindent The function $q_K$ is called \emph{Minkowski functional} associated to the set $K$ and quantifies the distance of a point $\psi\in\mathcal{B}(\mathcal{H})_{\ast}$ to $K$ along rays passing through the origin. In the case that $\mathcal{H}$ is a tensor product of Hilbert spaces appropriate choices of $V$ will allow to quantify multipartite entanglement in many different scenarios. 
\begin{proposition}\label{PA}
With $K$ and its associated functions $\norm{\cdot}_K$ and $q_K$ as defined above, we have the following properties:
\begin{enumerate}
\item\label{PA1}
The set $K$ is a norm-closed convex and balanced subset of the unit ball of $\mathcal{B}(\mathcal{H})_{\ast}$ that separates points of $\mathcal{B}(\mathcal{H})$. In particular, we have $0\in K$ and 
$$\norm{\psi}\leq q_K(\psi)\quad\mbox{for all }\psi\in\mathcal{B}(\mathcal{H})_{\ast}\,.$$
Moreover, $K$ equals the ''unit ball'' of $q_K$, i.e., 
\[K=\set{\psi\in\mathcal{B}(\mathcal{H})_{\ast}:\,q_K(\psi)\leq 1}\,.\]
\item\label{PA2}
The function $\norm{\cdot}_K$ is a norm on $\mathcal{B}(\mathcal{H})$ such that $\norm{x}_K\leq\norm{x}$ for all $x\in\mathcal{B}(\mathcal{H})$ and we have 
$$\norm{x}_K=\sup\set{|\omega_{\xi,\eta}(x)|:\,\xi,\eta\in V}=\sup\set{|\langle x\xi,\eta\rangle|:\,\xi,\eta\in V}\,.$$
\item\label{PA3}
The Minkowski functional $q_K$ has the alternative representation 
$$q_K(\psi)=\sup\set{|\psi(x)|:\,x\in\mathcal{B}(\mathcal{H}),\,\norm{x}_K\leq 1}\quad\mbox{for all }\psi\in\mathcal{B}(\mathcal{H})_{\ast}\,.$$
\end{enumerate}
\end{proposition}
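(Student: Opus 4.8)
The plan is to dispatch part \ref{PA1} by elementary convexity and topology. For balancedness I would use the identity $\lambda\omega_{\xi,\eta}=\omega_{\lambda\xi,\eta}$ for $|\lambda|=1$, which together with the balancedness of $V$ shows that the generating set $\{\omega_{\xi,\eta}:\xi,\eta\in V\}$ is balanced; passing to the convex hull and then to the norm closure preserves both balancedness and convexity. To see $0\in K$ I would observe that $\frac12\omega_{\xi,\eta}+\frac12\omega_{-\xi,\eta}=0$ already lies in the convex hull, using $-\xi\in V$. Each generator satisfies $\norm{\omega_{\xi,\eta}}\le\norm{\xi}\norm{\eta}=1$, so $K$ sits inside the convex, norm-closed unit ball of $\mathcal{B}(\mathcal{H})_\ast$. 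For the separation property I would assume $\psi(x)=0$ for all $\psi\in K$, specialize to $\omega_{\xi,\eta}(x)=\langle x\xi,\eta\rangle=0$, and use that $V$ separates points twice: for fixed $\xi$, vanishing of $\langle x\xi,\eta\rangle$ for all $\eta\in V$ gives $x\xi=0$, and since the closed linear span of the separating set $V$ is all of $\mathcal{H}$, we get $x=0$. The estimate $\norm{\psi}\le q_K(\psi)$ follows because any $\alpha$ with $\psi\in\alpha K$ yields $\psi=\alpha\chi$ with $\norm{\chi}\le1$. Finally, for $K=\set{\psi:q_K(\psi)\le1}$ the inclusion ``$\subseteq$'' is immediate, while for ``$\supseteq$'' I would treat $q_K(\psi)<1$ via $0\in K$ and convexity, and the boundary case $q_K(\psi)=1$ by choosing $\alpha_n\downarrow1$ with $\psi/\alpha_n\in K$ and invoking norm-closedness of $K$.

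For part \ref{PA2} the norm axioms for $\norm{x}_K=\sup_{\psi\in K}|\psi(x)|$ are routine: absolute homogeneity and the triangle inequality pass through the supremum, definiteness is exactly the separation property of $K$ just established, and finiteness together with $\norm{x}_K\le\norm{x}$ comes from $K\subseteq B_{1,\mathcal{B}(\mathcal{H})_\ast}$. The only point needing a short argument is reducing the supremum from $K$ to the generators. I would fix $x$, regard $f(\psi):=\psi(x)$ as a fixed continuous linear functional on $\mathcal{B}(\mathcal{H})_\ast$, note that for any convex combination $\psi=\sum_i t_i\,\omega_{\xi_i,\eta_i}$ one has $|f(\psi)|\le\max_i|f(\omega_{\xi_i,\eta_i})|\le\sup_{\xi,\eta\in V}|\omega_{\xi,\eta}(x)|$, so the supremum over $\mathrm{co}\{\omega_{\xi,\eta}\}$ equals that over the generators, and then extend to the norm closure by continuity of $f$. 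Substituting $\omega_{\xi,\eta}(x)=\langle x\xi,\eta\rangle$ yields the last expression.

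Part \ref{PA3} is the genuine content, and I expect it to be the main obstacle. The key is the duality $\mathcal{B}(\mathcal{H})=(\mathcal{B}(\mathcal{H})_\ast)^\ast$ under the pairing $\langle\psi,x\rangle=\psi(x)$, so that the continuous linear functionals on $\mathcal{B}(\mathcal{H})_\ast$ are exactly the operators $x\in\mathcal{B}(\mathcal{H})$; the set $\set{x:\norm{x}_K\le1}$ is then precisely the polar $K^\circ$ of $K$, and the right-hand side is the gauge of the bipolar. Writing $p(\psi)$ for that right-hand supremum, the easy inequality $p(\psi)\le q_K(\psi)$ I would obtain directly: if $q_K(\psi)=c<\infty$ then $\psi\in cK$ by part \ref{PA1}, so $\psi=c\chi$ with $\chi\in K$ and $|\psi(x)|=c|\chi(x)|\le c\norm{x}_K\le c$ whenever $\norm{x}_K\le1$. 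For the reverse inequality $q_K(\psi)\le p(\psi)$ I would argue by contradiction: if $p(\psi)<\lambda<q_K(\psi)$ then $\psi\notin\lambda K$, and since $\lambda K$ is norm-closed and convex, the geometric Hahn--Banach theorem furnishes a continuous functional, i.e.\ an operator $x\in\mathcal{B}(\mathcal{H})$, with $\mathrm{Re}\,\psi(x)>\gamma\ge\mathrm{Re}\,\chi(x)$ for all $\chi\in\lambda K$. The decisive step, and the place where most care is needed, is to upgrade this real-part separation to a modulus bound: because $\lambda K$ is balanced one may rotate $\chi$ by arbitrary phases to conclude $\lambda\norm{x}_K=\sup_{\chi\in\lambda K}|\chi(x)|\le\gamma$, while $0\in\lambda K$ forces $\gamma\ge0$. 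Normalizing $x$ by $\norm{x}_K$ (which is nonzero, since $x=0$ would contradict $\mathrm{Re}\,\psi(x)>\gamma\ge0$) then produces an operator of $\norm{\cdot}_K$-norm $1$ on which $\psi$ exceeds $\lambda$, so $p(\psi)>\lambda$, contradicting $\lambda>p(\psi)$. The same contradiction covers $q_K(\psi)=\infty$ by taking any finite $\lambda>p(\psi)$, so $p=q_K$ throughout. Alternatively one could phrase part \ref{PA3} entirely as an application of the bipolar theorem, using that a norm-closed convex set is $\sigma(\mathcal{B}(\mathcal{H})_\ast,\mathcal{B}(\mathcal{H}))$-closed, whence $K^{\circ\circ}=K$ and the positive homogeneity of $p$ forces $p=q_K$; the Hahn--Banach argument above is simply the proof of the nontrivial half of that theorem in the present setting.
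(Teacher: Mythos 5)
Your proof is correct, and for parts (\ref{PA1}) and (\ref{PA2}) it is essentially the paper's proof, with two harmless variations: you get the separation property from the fact that a separating set has dense linear span (so $x\xi=0$ on $V$ forces $x=0$), where the paper instead passes to the adjoint via $\langle\xi,x^{\ast}\zeta\rangle=0$; and you prove $K=\set{\psi:\,q_K(\psi)\leq 1}$ by hand (convexity with $0\in K$ for $q_K(\psi)<1$, norm-closedness for the boundary case), where the paper cites this as a known fact from Defant--Floret. The genuine divergence is in part (\ref{PA3}). The paper computes the polar $K^{\circ}=B_{1,\norm{\cdot}_K}$ and the bipolar $K^{\circ\circ}$, invokes the fact that norm closure of convex sets in $\mathcal{B}(\mathcal{H})_{\ast}$ coincides with $\sigma(\mathcal{B}(\mathcal{H})_{\ast},\mathcal{B}(\mathcal{H}))$-closure together with the Bipolar Theorem to conclude $K=K^{\circ\circ}$, and then deduces $q_K=f$ from equality of the unit balls plus positive homogeneity. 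You instead prove the two inequalities directly: the easy one from part (\ref{PA1}), and the hard one by Hahn--Banach separation of $\psi$ from the norm-closed convex set $\lambda K$, using balancedness of $\lambda K$ to upgrade the real-part bound to the modulus bound $\lambda\norm{x}_K\leq\gamma$ and $0\in\lambda K$ to force $\gamma\geq 0$. As you yourself observe, this is precisely the proof of the nontrivial half of the bipolar theorem in this concrete setting, so the two routes are mathematically equivalent; yours buys self-containedness (no appeal to the Bipolar Theorem or to the unit-ball-plus-homogeneity step) and treats the case $q_K(\psi)=\infty$ explicitly, while the paper's is shorter by citation. Both arguments rest on the same crucial duality fact, namely that every norm-continuous linear functional on $\mathcal{B}(\mathcal{H})_{\ast}$ is of the form $\psi\mapsto\psi(x)$ for some $x\in\mathcal{B}(\mathcal{H})$, which is what guarantees that the separating functional (respectively, the polar) lives in $\mathcal{B}(\mathcal{H})$.
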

\begin{proof}$\,$
\begin{enumerate}
\item 
The set $K$ is convex and norm-closed by definition. In order to show that $K$ separates points of $\mathcal{B}(\mathcal{H})$ it is sufficient to show that the set
$$\set{\omega_{\xi,\eta}:\,\xi,\eta\in V}\subset K$$
has this property. Therefore, suppose that $\omega_{\xi,\eta}(x)=0$ for all $\xi,\eta\in V$. Then
$$\langle x\xi,\eta\rangle=0\quad\mbox{for all }\xi,\eta\in V$$
and therefore $x\xi=0$ for all $\xi\in V$ since $V$ separates points of $\mathcal{H}$. For every $\zeta\in\mathcal{H}$ and $\xi\in V$ it follows
$$0=\langle x\xi,\zeta\rangle=\langle\xi,x^{\ast}\zeta\rangle\,.$$
Hence, $x^{\ast}\zeta=0$ for all $\zeta\in\mathcal{H}$ which implies
$$x=x^{\ast\ast}=0^{\ast}=0\,.$$
Since
$$\norm{\omega_{\xi,\eta}}=\norm{\xi}\cdot\norm{\eta}=1\mbox{ for all }\xi,\eta\in V$$
we have
$$K=\overline{\mbox{co}}^{\norm{\cdot}}\set{\omega_{\xi,\eta}:\,\xi,\eta\in V}\subset B_{1,\mathcal{B}(\mathcal{H})_{\ast}}\,.$$
It is a well-known fact (compare for example Section I.3 of \cite{Defant}) that for every $\varepsilon>0$ we have
$$K\subset \set{\varphi\in\mathcal{B}(\mathcal{H})_{\ast}:\,q_K(\varphi)\leq 1}\subset (1+\varepsilon)\cdot K$$
and using that $K$ is norm-closed it easily follows that $K$ equals the unit ball of $q_K$.
\item
For all $x\in\mathcal{B}(\mathcal{H})$ the mapping
$$\mathcal{B}(\mathcal{H})_{\ast}\ni\psi\mapsto |\psi(x)|\in\C$$
is convex and norm-continuous so that
\begin{eqnarray*}
\norm{x}_K&=&\sup\set{|\psi(x)|:\,\psi\in\overline{\mbox{co}}^{\norm{\cdot}}\set{\omega_{\xi,\eta}:\,\xi,\eta\in V}}\\
&=&\sup\set{|\omega_{\xi,\eta}(x)|:\,\xi,\eta\in V}\,.
\end{eqnarray*}
The remaining properties of $\norm{\cdot}_K$ are obvious.
\item
Using the concept of the \emph{polar} $K^{\circ}\subset\mathcal{B}(\mathcal{H})$ of $K$ with respect to the dual pair $(\mathcal{B}(\mathcal{H})_{\ast},\mathcal{B}(\mathcal{H}))$ (compare for example Section V.1 of \cite{Convay}) we have
\begin{eqnarray*}	
K^{\circ}&=&\set{x\in\mathcal{B}(\mathcal{H}):\,|\varphi(x)|\leq 1\mbox{ for all }\varphi\in K}\\
&=&\set{x\in\mathcal{B}(\mathcal{H}):\,\sup\set{|\varphi(x)|:\,\varphi\in K}\leq 1}\\
&=&\set{x\in\mathcal{B}(\mathcal{H}):\,\norm{x}_K\leq 1}\\
&=&B_{1,\norm{\cdot}_K}\,.
\end{eqnarray*}
The \emph{bipolar} $K^{\circ\circ}\subset\mathcal{B}(\mathcal{H})_{\ast}$ of $K$ is then given by
\begin{eqnarray*}	
K^{\circ\circ}&=&\set{\varphi\in\mathcal{B}(\mathcal{H})_{\ast}:\,|\varphi(x)|\leq 1\mbox{ for all }x\in K^{\circ}}\\
&=&\set{\varphi\in\mathcal{B}(\mathcal{H})_{\ast}:\,\sup\set{|\varphi(x)|:\,x\in K^{\circ}}\leq 1}\\
&=&\set{\varphi\in\mathcal{B}(\mathcal{H})_{\ast}:\,\sup\set{|\varphi(x)|:\,x\in \mathcal{B}(\mathcal{H}),\,\norm{x}_K\leq 1}\leq 1}\,.
\end{eqnarray*}
Using the fact that the $\sigma(\mathcal{B}(\mathcal{H})_{\ast},\mathcal{B}(\mathcal{H}))$-closure of a convex subset of $\mathcal{B}(\mathcal{H})_{\ast}$ equals its norm closure the Bipolar Theorem and $(i)$ imply
$$K\stackrel{(i)}{=}\overline{\mbox{co}}^{\norm{\cdot}}(K\cup\set{0})=\overline{\mbox{co}}^{\sigma(\mathcal{B}(\mathcal{H})_{\ast},\mathcal{B}(\mathcal{H}))}(K\cup\set{0})=K^{\circ\circ}\,.$$
Hence $K$ equals the unit ball of the function
$$f(\varphi):=\sup\set{|\varphi(x)|:\,x\in \mathcal{B}(\mathcal{H}),\,\norm{x}_K\leq 1}\quad\mbox{for }\varphi\in\mathcal{B}(\mathcal{\varphi})_{\ast}\,.$$
Therefore, since the functions $q_K$ and $f$ possess the same unit ball and are homogeneous for positive scalars it follows $q_K=f$.
\end{enumerate}
\end{proof}
\noindent Note that from the representation of $q_K$ according to Proposition \ref{PA}.(\ref{PA3}) it easily follows that $q_K$ satisfies the triangle inequality. Hence, $q_K$ has all characteristic properties of a norm except that it may attain the value ''$\infty$''. We may therefore speak of a \emph{generalized norm}.\\ 
\noindent The above construction slightly differs from Arveson's approach. In his publication \cite{Arveson} the unital $\mbox{C}^{\ast}$-algebra 
$$\mathcal{A}:=\mathcal{K}(\mathcal{H})+\C\mathbbmss{1}$$
is used in order to define a function $E:\mathcal{B}(\mathcal{H})_{\ast}\rightarrow[0,\infty]$ given by
\[E(\psi):=\sup\set{|\psi(x)|:\,x\in\mathcal{A},\,\norm{x}_K\leq 1}\quad\mbox{for }\psi\in\mathcal{B}(\mathcal{H})_{\ast}\,.\]
Hence, comparing $q_K$ with $E$ we see that for $q_K$ the supremum according to Proposition \ref{PA}.(\ref{PA3}) runs over $\mathcal{B}(\mathcal{H})$ whereas in the case of $E$ the supremum is restricted to the $\mbox{C}^{\ast}$-algebra $\mathcal{A}$. As a matter of fact both functions $E$ and $q_K$ can detect separability equally well.
\begin{theorem}\label{TB}
Let $\varphi$ be a normal state on $\mathcal{B}(\mathcal{H})$. Then the following conditions are equivalent:
\begin{enumerate}
\item 
$\varphi\in\overline{\mathrm{co}}^{\norm{\cdot}}\{\omega_{\xi}:\,\xi\in V\}$.
\item
$E(\varphi)=1$.
\item
$q_K(\varphi)=1$.
\end{enumerate}
\end{theorem}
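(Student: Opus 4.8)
The plan is to prove the cycle $(1)\Rightarrow(3)\Rightarrow(2)\Rightarrow(1)$, putting essentially all the weight on the last step. Write $C:=\overline{\mathrm{co}}^{\norm{\cdot}}\set{\omega_\xi:\xi\in V}$ and record two preliminary facts. First, $C\subset K$, since each $\omega_\xi=\omega_{\xi,\xi}$ lies in the generating set of $K$ and $K$ is convex and norm-closed. Second, $\norm{\mathbbmss{1}}_K=\sup\set{|\langle\xi,\eta\rangle|:\xi,\eta\in V}=1$, the value $1$ being attained at $\eta=\xi$; hence, testing at $x=\mathbbmss{1}\in\mathcal{A}$, every normal state $\varphi$ satisfies $q_K(\varphi)\ge E(\varphi)\ge\varphi(\mathbbmss{1})=1$, where the inequality $E\le q_K$ is immediate from Proposition \ref{PA}.(\ref{PA3}) because the supremum defining $E$ runs over the smaller set $\mathcal{A}\subset\mathcal{B}(\mathcal{H})$. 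Now $(1)\Rightarrow(3)$ follows from $\varphi\in C\subset K=\set{q_K\le 1}$, which gives $q_K(\varphi)\le 1$ and hence $q_K(\varphi)=1$; and $(3)\Rightarrow(2)$ is the sandwich $1\le E(\varphi)\le q_K(\varphi)=1$.

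The substance is $(2)\Rightarrow(1)$, which I would prove by contraposition: assuming $\varphi\notin C$ I will exhibit $a\in\mathcal{A}$ with $\norm{a}_K\le 1$ and $|\varphi(a)|>1$, so that $E(\varphi)>1$. For the separation I need $C$ to be closed in the topology $\sigma(\mathcal{B}(\mathcal{H})_{\ast},\mathcal{A})$, whose continuous dual is exactly $\mathcal{A}$, and here the identity hidden in $\mathcal{A}=\mathcal{K}(\mathcal{H})+\C\mathbbmss{1}$ is essential. The elements of $C$ are normal states, i.e.\ positive operators of trace $1$; a net in $C$ converging in $\sigma(\mathcal{B}(\mathcal{H})_{\ast},\mathcal{A})$ converges in particular weak-$\ast$ against $\mathcal{K}(\mathcal{H})$ and has constant trace $1$ (tested against $\mathbbmss{1}$), so by a standard convergence theorem for positive trace-class operators (weak-$\ast$ convergence together with convergence of traces forces trace-norm convergence) the limit is again a normal state and the convergence is in norm; by norm-closedness it lies in $C$. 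Thus $C$ is $\sigma(\mathcal{B}(\mathcal{H})_{\ast},\mathcal{A})$-closed, and Hahn--Banach separation yields $a\in\mathcal{A}$ and a real $\gamma$ with $\mathrm{Re}\,\varphi(a)>\gamma\ge\mathrm{Re}\,\psi(a)$ for all $\psi\in C$. Since states satisfy $\psi(a^{\ast})=\overline{\psi(a)}$, replacing $a$ by the self-adjoint $(a+a^{\ast})/2\in\mathcal{A}$ and subtracting $\gamma\mathbbmss{1}$ produces a self-adjoint $b\in\mathcal{A}$ with $\langle b\xi,\xi\rangle=\omega_\xi(b)\le 0$ for all $\xi\in V$ and $\varphi(b)=\delta>0$.

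It remains to turn $b$ into a normalized witness, and this is the step I expect to be the main obstacle: $\norm{b}_K$ involves the off-diagonal quantities $\langle b\xi,\eta\rangle$ with $\xi\ne\eta$, which the diagonal constraint $\langle b\xi,\xi\rangle\le 0$ does not control when $\xi,\eta$ are nearly parallel (so that $|\langle\xi,\eta\rangle|\approx 1$). The remedy is to perturb around the identity and use that the first-order effect along the ``diagonal direction'' is non-positive. Concretely I would test $E$ on $\mathbbmss{1}+tb$ for small $t>0$. Given unit vectors $\xi,\eta\in V$, decompose $\xi=\langle\xi,\eta\rangle\,\eta+\xi'$ with $\xi'\perp\eta$ and $\norm{\xi'}^2=1-|\langle\xi,\eta\rangle|^2$; using self-adjointness, $\langle b\eta,\eta\rangle\le 0$, and Cauchy--Schwarz one obtains
$$|\langle(\mathbbmss{1}+tb)\xi,\eta\rangle|\le|\langle\xi,\eta\rangle|+t\norm{b}\,\norm{\xi'}=|\langle\xi,\eta\rangle|+t\norm{b}\sqrt{1-|\langle\xi,\eta\rangle|^2}\le\sqrt{1+t^2\norm{b}^2},$$
the last step maximizing $s\mapsto s+t\norm{b}\sqrt{1-s^2}$ over $s\in[0,1]$. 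Hence $\norm{\mathbbmss{1}+tb}_K\le\sqrt{1+t^2\norm{b}^2}$ while $\varphi(\mathbbmss{1}+tb)=1+t\delta$, so $E(\varphi)\ge(1+t\delta)/\sqrt{1+t^2\norm{b}^2}$. Since $(1+t\delta)^2>1+t^2\norm{b}^2$ for all sufficiently small $t>0$, this gives $E(\varphi)>1$, contradicting $(2)$ and closing the cycle.

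In summary, the crux is twofold: the closedness of $C$ in the $\mathcal{A}$-topology, which rests on the trace functional $\mathbbmss{1}$ pinning the mass together with the convergence theorem for positive trace-class operators, and the quadratic estimate $\norm{\mathbbmss{1}+tb}_K\le\sqrt{1+t^2\norm{b}^2}$, which converts a merely diagonal separation into a first-order gain $t\delta$ in $\varphi$ against only a second-order loss in the norm.
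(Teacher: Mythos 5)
Your proof is correct, but it takes a genuinely different route from the paper. The paper does not reprove the equivalence at all: it cites Arveson's Theorem 6.2 for the equivalence of (1) and (2), and then observes that the only properties of $\mathcal{A}=\mathcal{K}(\mathcal{H})+\C\mathbbmss{1}$ used in Arveson's argument are that it contains the compact operators and the unit (the unit guaranteeing weak-$\ast$ compactness of the state space of $\mathcal{A}$); since $\mathcal{B}(\mathcal{H})$ shares these properties, the same proof gives the equivalence with (3). You, by contrast, give a self-contained argument, and your implication $(2)\Rightarrow(1)$ is in substance a reproof of the hard direction of Arveson's theorem by a different mechanism: where Arveson leans on compactness of the state space of the unital algebra, you control limits via the convergence theorem for density operators (weak-$\ast$ convergence against compacts plus convergence of traces forces trace-norm convergence), establish $\sigma(\mathcal{B}(\mathcal{H})_{\ast},\mathcal{A})$-closedness of $C$, and then upgrade the Hahn--Banach separating functional to a witness through the quadratic estimate $\norm{\mathbbmss{1}+tb}_K\le\sqrt{1+t^2\norm{b}^2}$. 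That estimate is the genuinely novel point and it is exactly what is needed: $K$ is generated by the off-diagonal functionals $\omega_{\xi,\eta}$, while the separation only constrains the diagonal ones $\omega_{\xi}$, and your perturbation around $\mathbbmss{1}$ trades a first-order gain $t\delta$ against a second-order loss in $\norm{\cdot}_K$. Two small points should be made explicit but are not gaps: the convergence theorem you invoke is usually stated for sequences, yet its standard proof (truncation by a finite-rank projection, positivity to bound the tails, Cauchy--Schwarz for the cross terms) uses only ``eventually'' statements and so applies verbatim to nets, which is what closedness requires; and the step $|\langle\xi,\eta\rangle|\cdot|1+t\langle b\eta,\eta\rangle|\le|\langle\xi,\eta\rangle|$ needs $t\le 1/\norm{b}$, harmless since you let $t\to 0^{+}$. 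As for what each approach buys: the paper's proof is short and makes the structural point that any C$^{\ast}$-algebra between $\mathcal{K}(\mathcal{H})+\C\mathbbmss{1}$ and $\mathcal{B}(\mathcal{H})$ yields the same separability criterion, whereas your proof is independent of access to Arveson's paper and makes completely explicit the roles of the unit (pinning the trace so that no mass escapes to infinity, and serving as the base point of the perturbation) and of the compacts (supplying the weak topology in which limits of states can be controlled).
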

\begin{proof}
The equivalence of (i) and (ii) has been proven by Arveson (compare Theorem 6.2 in \cite{Arveson}). A careful inspection of his proof reveals that the only relevant property of the algebra $\mathcal{A}$ is the fact that it containes the unit operator (which guarantees the set of states on $\mathcal{A}$ to be compact in its $\sigma^{\ast}$-topology) as well as the set of compact operators on $\mathcal{H}$. Both conditions are met by the $\mbox{C}^{\ast}$-algebra $\mathcal{B}(\mathcal{H})$ as well so that the equivalence can be proven literary.
\end{proof}
\noindent Note that according to the proof of Theorem \ref{TB} \emph{any} $\mbox{C}^{\ast}$-subalgebra of $\mathcal{B}(\mathcal{H})$ containing the set of compact operators and the unit operator induces an (abstract) entanglement measure with essentially the same structure as $E$ and $q_K$. Therefore, it is justified to speak of a generalization of Arveson's result.\\
\indent The equivalence of (i) and (ii) in the above theorem is the basis of Arveson's entanglement quantification. However, due to the fact that these statements are equivalent to (iii) as well both functions $E$ and $q_K$ may serve as entanglement measures. We will use the latter since our main result Theorem \ref{TA} needs to be formulated in terms of $q_K$.\\
\indent Now suppose we are given a nonzero orthogonal projection $p\in\mathcal{B}(\mathcal{H})$ and a normal state $\varphi$ on $\mathcal{B}(\mathcal{H})$ such that
$$\varphi(x)=\varphi(pxp)\quad\mbox{for all }x\in\mathcal{B}(\mathcal{H})\,.$$
Since $\varphi(p)=\varphi(p\mathbbmss{1}p)=\varphi(\mathbbmss{1})=1$ we may identify $\varphi$ with a normal state on the algebra $\mathcal{B}(p\mathcal{H})$. On the other hand, projecting the set $V$ to the subspace $p\mathcal{H}$ (along with some further operations) gives rise to a convex and balanced subset $K_p\subset\mathcal{B}(p\mathcal{H})_{\ast}$ according to Definition \ref{D4}. Therefore, it is a natural question to ask as to how the values $q_K(\varphi)$ and $q_{K_p}(\varphi)$ are related to each other. Theorem \ref{TA} gives a very general answer to that question. In the remainder of this section we will discuss it in detail and to this end we introduce the following notation.
\begin{definition}\label{D5}
Let $0\neq p\in\mathcal{B}(\mathcal{H})$ be an orthogonal projection. For a number $\lambda\geq 1$ we define subsets of $p\mathcal{H}$ and $\mathcal{B}(p\mathcal{H})_{\ast}$ by
$$V_{p,\lambda}:=\lambda pV\cap S_{1,\mathcal{H}}$$
and
$$K_{p,\lambda}:=\overline{\mathrm{co}}^{\norm{\cdot}}\{\omega_{\xi,\eta}:\,\xi,\eta\in V_{p,\lambda}\}$$
respectively. If $V_{p,\lambda}$ satisfies the condition that for all $\xi\in V$ with $p\xi\neq 0$ the relation 
$$\frac{p\xi}{\norm{p\xi}}\in V_{p,\lambda}\,,$$
is fulfilled we say that the projection $p$ is \emph{compatible} with the pair $(V,\lambda)$.
\end{definition}
\noindent Note that the set $V_{p,\lambda}$ as defined above may be empty. In particular, it is not necessarily a separating subset of $p\mathcal{H}$. However, it is always balanced since this property is inherited from $V$.\\
\indent We will see in the sections \ref{SB} and \ref{SC} that the conditions formulated in the following theorem are met by important physical scenarios although this may seem very unlikely on the first glance.
\begin{theorem}\label{TA}
For a nonzero orthogonal projection $p\in\mathcal{B}(\mathcal{H})$ and a number $\lambda\geq 1$ let $V_{p,\lambda}$ and $K_{p,\lambda}$ be the sets introduced in Definition \ref{D5}.
\begin{enumerate}
\item\label{TA1}
Suppose that $V_{p,\lambda}$ separates points of $p\mathcal{H}$ and that the mapping 
$$\mathcal{B}(\mathcal{H})\ni y\mapsto pyp\in\mathcal{B}(\mathcal{H})$$
is a $\norm{\cdot}_{K}$-contraction. Identifying $\mathcal{B}(p\mathcal{H})$ with the subalgebra $p\mathcal{B}(\mathcal{H})p\subset\mathcal{B}(\mathcal{H})$ we have the following inequalities:
\begin{enumerate}
\item 
$\norm{x}_{K_{p,\lambda}}\leq\lambda^2\cdot\norm{x}_{K}\mbox{ for all }x\in\mathcal{B}(p\mathcal{H})\,.$
\item\label{TA12}
$q_{K_{p,\lambda}}(\psi)\geq\frac{1}{\lambda^2}\cdot q_{K}(\psi)\mbox{ for all }\psi\in\mathcal{B}(p\mathcal{H})_{\ast}\,.$
\end{enumerate}
\item\label{TA2}
If $p$ is compatible with the pair $(V,\lambda)$ then $V_{p,\lambda}$ separates points of $p\mathcal{H}$. Moreover, defining
$$\mu:=\sup\set{r\geq 1:\,rpV\in B_{1,\mathcal{H}}}$$
we have:
\begin{enumerate}
\item 
$\norm{x}_{K_{p,\lambda}}\geq\mu^2\cdot\norm{x}_{K}\mbox{ for all }x\in\mathcal{B}(p\mathcal{H})$\,.
\item\label{TA22}
$q_{K_{p,\lambda}}(\psi)\leq\frac{1}{\mu^2}\cdot q_{K}(\psi)\mbox{ for all }\psi\in\mathcal{B}(\mathcal{H})_{\ast}\,.$
\end{enumerate}
\end{enumerate}
\end{theorem}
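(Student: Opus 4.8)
The plan is to reduce everything to the two representations furnished by Proposition \ref{PA}: the norm $\norm{\cdot}_K$ as a supremum of vector functionals (Proposition \ref{PA}.(\ref{PA2})) and the Minkowski functional $q_K$ as the dual supremum over the $\norm{\cdot}_K$-unit ball (Proposition \ref{PA}.(\ref{PA3})). Once $V_{p,\lambda}$ is known to separate points of $p\mathcal{H}$, Proposition \ref{PA} applies verbatim to the Hilbert space $p\mathcal{H}$ together with the balanced set $V_{p,\lambda}$ and yields the analogous representations for $\norm{\cdot}_{K_{p,\lambda}}$ and $q_{K_{p,\lambda}}$. The entire argument then rests on one scaling identity relating the two families of vector functionals, after which all four inequalities drop out by matching supremum/duality representations.

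First I would record that identity. For $x\in\mathcal{B}(p\mathcal{H})$, identified with $pxp$, and for $\xi=\lambda p\zeta$, $\eta=\lambda p\zeta'\in V_{p,\lambda}$ with $\zeta,\zeta'\in V$, self-adjointness of $p$ together with $pxp=x$ gives
$$\omega_{\xi,\eta}(x)=\langle x\xi,\eta\rangle=\lambda^{2}\langle xp\zeta,p\zeta'\rangle=\lambda^{2}\langle x\zeta,\zeta'\rangle=\lambda^{2}\,\omega_{\zeta,\zeta'}(x)\,.$$
Since $\zeta,\zeta'\in V$ we have $|\omega_{\zeta,\zeta'}(x)|\leq\norm{x}_K$, and taking the supremum over $\xi,\eta\in V_{p,\lambda}$ proves the first inequality of \ref{TA1}, namely $\norm{x}_{K_{p,\lambda}}\leq\lambda^{2}\norm{x}_K$; note that the contraction hypothesis is \emph{not} needed for this. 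For \ref{TA12} I would dualize: by Proposition \ref{PA}.(\ref{PA3}) one has $q_K(\psi)=\sup\{\,|\psi(y)|:y\in\mathcal{B}(\mathcal{H}),\,\norm{y}_K\leq 1\,\}$, and this is exactly where the contraction enters. Reading $\psi\in\mathcal{B}(p\mathcal{H})_{\ast}$ via the compression $y\mapsto\psi(pyp)$, the assumption $\norm{pyp}_K\leq\norm{y}_K$ lets me replace $y$ by $x:=pyp\in\mathcal{B}(p\mathcal{H})$ with $\norm{x}_K\leq 1$; then the first inequality of \ref{TA1} gives $\norm{x}_{K_{p,\lambda}}\leq\lambda^{2}$, so $|\psi(x)|\leq\lambda^{2}q_{K_{p,\lambda}}(\psi)$. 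Passing to the supremum yields $q_K(\psi)\leq\lambda^{2}q_{K_{p,\lambda}}(\psi)$, which is \ref{TA12}.

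For part \ref{TA2} I would first verify separation: if $\zeta\in p\mathcal{H}$ is orthogonal to all of $V_{p,\lambda}$, then for every $\xi\in V$ one has $\langle\zeta,\xi\rangle=\langle\zeta,p\xi\rangle$, which vanishes either because $p\xi=0$ or because $p\xi/\norm{p\xi}\in V_{p,\lambda}$ by compatibility; since $V$ separates points of $\mathcal{H}$ this forces $\zeta=0$ (and $\sup_{\xi\in V}\norm{p\xi}>0$ because $p\neq0$). Next I would identify $\mu=1/\sup_{\xi\in V}\norm{p\xi}$, so that $\norm{p\xi}\leq 1/\mu$ for all $\xi\in V$. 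Reversing the scaling identity, for $\zeta,\zeta'\in V$ with $p\zeta,p\zeta'\neq 0$ I write $p\zeta=\norm{p\zeta}\hat\zeta$ with $\hat\zeta:=p\zeta/\norm{p\zeta}\in V_{p,\lambda}$ (and similarly $\hat\zeta'$), obtaining $|\omega_{\zeta,\zeta'}(x)|=\norm{p\zeta}\,\norm{p\zeta'}\,|\omega_{\hat\zeta,\hat\zeta'}(x)|\leq\mu^{-2}\norm{x}_{K_{p,\lambda}}$, while the degenerate cases contribute $0$. The supremum over $\zeta,\zeta'\in V$ gives the first inequality of \ref{TA2}, $\norm{x}_{K_{p,\lambda}}\geq\mu^{2}\norm{x}_K$. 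Finally \ref{TA22} follows by the same dual manipulation: for $x\in\mathcal{B}(p\mathcal{H})$ with $\norm{x}_{K_{p,\lambda}}\leq 1$ the first inequality of \ref{TA2} gives $\norm{\mu^{2}x}_K\leq 1$, whence $|\psi(x)|\leq\mu^{-2}q_K(\psi)$ and therefore $q_{K_{p,\lambda}}(\psi)\leq\mu^{-2}q_K(\psi)$ for all $\psi\in\mathcal{B}(\mathcal{H})_{\ast}$; here neither contraction nor compression is needed, since $\psi$ already lives on all of $\mathcal{B}(\mathcal{H})$.

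Because the routine inequalities follow mechanically from the two representations, the only genuinely delicate points are bookkeeping ones: keeping straight in which Hilbert space, and on which algebra, each norm and Minkowski functional is computed under the identification $\mathcal{B}(p\mathcal{H})\cong p\mathcal{B}(\mathcal{H})p$, and, in \ref{TA12}, correctly interpreting $q_K$ of a functional that a priori lives only on the corner $\mathcal{B}(p\mathcal{H})$. The latter is precisely where $\norm{\cdot}_K$-contractivity of $y\mapsto pyp$ is indispensable, as it is what permits the defining supremum over $\mathcal{B}(\mathcal{H})$ to be reduced to the subalgebra $\mathcal{B}(p\mathcal{H})$; I expect this reduction to be the main obstacle, everything else being a careful but direct transcription of Proposition \ref{PA}.
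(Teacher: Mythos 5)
Your proof is correct and follows essentially the same route as the paper's: the scaling identity between $\omega_{\xi,\eta}$ on $V_{p,\lambda}$ and $\omega_{\zeta,\zeta'}$ on $V$ for the two norm inequalities, and the dual representation of Proposition \ref{PA}.(\ref{PA3}) (with the contraction hypothesis entering exactly where you place it) for the two Minkowski-functional inequalities. Your only departures are cosmetic simplifications --- bounding each term of the supremum directly instead of the paper's $\varepsilon$-arguments and its case split $q_K(\psi)<\infty$ versus $q_K(\psi)=\infty$, and explicitly identifying $\mu=1/\sup_{\xi\in V}\norm{p\xi}$ rather than just using $\norm{\mu p\xi}\leq 1$.
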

\begin{proof}$\,$
\begin{enumerate}
\item
By the construction of $V_{p,\lambda}$ we obtain
\begin{eqnarray*}
\norm{x}_{K_{p,\lambda}}&=&\sup\set{\left|\left\langle x\xi,\eta\right\rangle\right|:\,{\xi,\eta\in V_{p,\lambda}}}\\
&=&\lambda^2\cdot\sup\set{\left|\left\langle \underbrace{pxp}_{=x}\xi,\eta\right\rangle\right|:\,\xi,\eta\in V,\,\norm{\lambda p\xi}=\norm{\lambda p\eta}=1}\\
&\leq&\lambda^2\cdot\sup\set{\left|\left\langle x\xi,\eta\right\rangle\right|:\,{\xi,\eta\in V}}=\lambda^2\cdot\norm{x}_{K}
\end{eqnarray*} 
which proves inequality (a) (note that we did not use the contraction condition imposed on $\norm{\cdot}_K$).\\
\indent In order to prove (b) we first consider the case $q_{K}(\psi)=\infty$. Then by Proposition \ref{PA}.(\ref{PA3}) for every $C>0$ there is an element $x\in \mathcal{B}(\mathcal{H})$ with $\norm{x}_{K}\leq 1$ such that $|\psi(x)|\geq C$. Inequality (a) and the assumption about $\norm{\cdot}_{K}$ then imply
$$\norm{pxp}_{K_{p,\lambda}}\leq\lambda^2\cdot\norm{pxp}_{K}\leq \lambda^2\cdot\norm{x}_{K}\leq\lambda^2\,.$$
Using $\psi(x)=\psi(pxp)$ for all $x\in\mathcal{B}(\mathcal{H})$ a further application of Proposition \ref{PA}.(\ref{PA3}) to $q_{K_{p,\lambda}}$ gives
$$q_{K_{p,\lambda}}(\psi)\geq\left|\psi\left(\frac{pxp}{\lambda^2}\right)\right|=\frac{1}{\lambda^2}|\psi(x)|\geq\frac{C}{\lambda^2}\,,$$
that is $q_{K_{p,\lambda}}(\psi)=\infty$.\\
\indent On the other hand, if $q_{K}(\psi)<\infty$, Proposition \ref{PA}.(\ref{PA3}) allows to choose for every $\varepsilon>0$ an element $x\in \mathcal{B}(\mathcal{H})$ with $\norm{x}_{K}\leq 1$ such that
$$q_{K}(\psi)-\varepsilon<|\psi(x)|\,.$$
By the same computation as above we have $\norm{pxp}_{K_{p,\lambda}}\leq\lambda^2$ and it follows
$$|\psi(x)|=\lambda^2\cdot\left|\psi\left(\frac{pxp}{\lambda^2}\right)\right|\leq\lambda^2\cdot\sup\set{|\psi(y)|:\,y\in \mathcal{B}(p\mathcal{H}),\,\norm{y}_{K_{p}}\leq 1}=\lambda^2\cdot q_{K_{p,\lambda}}({\psi})$$
and therefore 
$$\frac{1}{\lambda^2}\cdot q_{K}(\psi)<\frac{\varepsilon}{\lambda^2}+q_{K_{p,\lambda}}(\psi)\,.$$
Hence, $\frac{1}{\lambda^2}\cdot q_{K}(\psi)\leq q_{K_{p,\lambda}}(\psi)$.
\item
Let $\xi\in p\mathcal{H}$ such that $\langle\xi,\eta\rangle=0$ for all $\eta\in V_{p,\lambda}$. For $\zeta\in V$ with $p\zeta\neq 0$ it follows from our assumption
$$\langle\xi,\zeta\rangle=\langle p\xi,\zeta\rangle=\langle\xi,p\zeta\rangle=\norm{p\zeta}\cdot\left\langle\xi,\frac{p\zeta}{\norm{p\zeta}}\right\rangle=0$$
and for $p\zeta=0$ we find analogously that $\langle\xi,\zeta\rangle=0$. Since $V$ separates points of $\mathcal{H}$ this implies $\xi=0$. Hence, $V_{p,\lambda}$ separates points of $p\mathcal{H}$.\\
In order to prove (a) we may assume that $x\neq 0$. Using Proposition \ref{PA}.(\ref{PA2}) for $\norm{x}_{K}>\varepsilon>0$ we can choose $\xi,\eta\in V$ such that
$$\norm{x}_{K}-\varepsilon<\left|\left\langle x\xi,\eta\right\rangle\right|\,.$$
Then we must have
$$0\neq|\langle x\xi,\eta\rangle|=|\langle xp\xi,p\eta\rangle|$$
and, in particular, $p\xi,p\eta\neq 0$. By the choice of $\mu$ and the condition imposed on $V_{p,\lambda}$ we obtain
\begin{eqnarray*}
\norm{x}_K-\varepsilon&<&\left|\left\langle x\xi,\eta\right\rangle\right|
=\frac{1}{\mu^2}\cdot\underbrace{\norm{\mu p\xi}}_{\leq 1}\cdot\underbrace{\norm{\mu p\eta}}_{\leq 1}\cdot\left|\left\langle x\frac{p\xi}{\norm{p\xi}},\frac{p\eta}{\norm{p\eta}}\right\rangle\right|\\
&\leq&\frac{1}{\mu^2}\cdot\sup\set{|\langle x\xi',\eta'\rangle|:\,\xi',\eta'\in V_{p,\lambda}}=\frac{1}{\mu^2}\norm{x}_{K_{p,\lambda}}\,.
\end{eqnarray*}
It follows
$$\mu^2\cdot\norm{x}_{K}<\mu^2\cdot\varepsilon+\norm{x}_{K_{p,\lambda}}\,;$$
that is $\mu^2\cdot\norm{x}_{K}\leq\norm{x}_{K_{p,\lambda}}$. By Proposition \ref{PA}.(\ref{PA3}) the inequality (b) now follows from (a) via
\begin{eqnarray*}
q_{K_{p,\lambda}}(\psi)&=&\sup\set{|\psi(x)|:\,x\in \mathcal{B}(p\mathcal{H}),\,\norm{x}_{K_{p,\lambda}}\leq 1}\\
&\stackrel{\mbox{\footnotesize{(a)}}}{\leq}&\frac{1}{\mu^2}\cdot\sup\set{|\psi(x)|:\,{x\in \mathcal{B}(p\mathcal{H})\,,\norm{x}_{K}\leq 1}}\\
&\leq&\frac{1}{\mu^2}\cdot\sup\set{|\psi(x)|:\,{x\in \mathcal{B}(\mathcal{H})\,,\norm{x}_{K}\leq 1}}\\
&=&\frac{1}{\mu^2}\cdot q_{K}(\psi)\,.
\end{eqnarray*}
\end{enumerate}
\end{proof}
\noindent Note that if the conditions of Theorem \ref{TA}.(\ref{TA1}) and (\ref{TA2}) are both satisfied we have the chains of inequalities
$$\mu^2\cdot\norm{x}_K\leq\norm{x}_{K_{p,\lambda}}\leq\lambda^2\cdot\norm{x}_K\quad\mbox{for all }x\in\mathcal{B}(p\mathcal{H})$$
and
$$\frac{1}{\lambda^2}\cdot q_K(\psi)\leq q_{K_{p,\lambda}}(\psi)\leq\frac{1}{\mu^2}q_K(\psi)\quad\mbox{for all }\psi\in\mathcal{B}(p\mathcal{H})_{\ast}\,.$$

\section{Quantifying Multipartite Entanglement of Indistinguishable Particles}\label{SB}
For a number $k\in\mathbbmss{N}$ we denote the $k$-fold Hilbert space tensor product of $\mathcal{H}$ with itself by $\mathcal{H}^{\otimes k}$. The following notion is the starting point for the quantification of entanglement of distinguishable particles.
\begin{definition}\label{D1}
We denote the set of unit product vectors of $\mathcal{H}^{\otimes k}$ by $V^{\otimes k}$, i.e.
$$V^{{\otimes k}}:=\set{\eta_1\otimes...\otimes\eta_k:\,\eta_i\in\mathcal{H},\,\norm{\eta_i}=1}\,.$$
A normal state $\varphi$ on $\mathcal{B}(\mathcal{H}^{\otimes k})$ is called \emph{separable} if
$$\varphi\in\overline{\mathrm{co}}^{\norm{\cdot}}\set{\omega_{\xi}:\,\xi\in V^{\otimes k}}\,.$$
Otherwise, it is called \emph{entangled}.
\end{definition}
\noindent Noting that the set $V^{{\otimes k}}$ is a balanced subset of the unit sphere that separates points of $\mathcal{H}^{\otimes k}$ we obtain a necessary and sufficient separability criterion for normal states.
\begin{corollary}\label{CA}
Let $K^{\otimes k}$ denote the norm-closed convex hull of the set
$$\left\{\omega_{\xi,\eta}:\,\xi,\eta\in V^{\otimes k}\right\}\,.$$
Then every normal state $\varphi$ on $\mathcal{B}(\mathcal{H}^{\otimes k})$ satisfies $q_{K^{\otimes k}}(\varphi)\geq 1$ and $q_{K^{\otimes k}}(\varphi)=1$ if and only if $\varphi$ is separable. 
\end{corollary}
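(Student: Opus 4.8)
The plan is to read this corollary as a direct specialization of the abstract framework of Definition \ref{D4}, Proposition \ref{PA}, and Theorem \ref{TB}, applied to the Hilbert space $\mathcal{H}^{\otimes k}$ with $V := V^{\otimes k}$. With this choice the set $K^{\otimes k}$ introduced in the corollary is literally the set $K$ of Definition \ref{D4}. The only thing that genuinely has to be checked is that $V^{\otimes k}$ satisfies the standing hypotheses imposed on $V$, namely that it is a balanced set of unit vectors that separates points of $\mathcal{H}^{\otimes k}$; everything else then follows mechanically.

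So first I would verify these hypotheses. Each element $\eta_1\otimes\cdots\otimes\eta_k$ of $V^{\otimes k}$ has norm $\norm{\eta_1}\cdots\norm{\eta_k}=1$, so $V^{\otimes k}\subset S_{1,\mathcal{H}^{\otimes k}}$; balancedness follows by absorbing a unimodular scalar into the first factor, $\lambda(\eta_1\otimes\cdots\otimes\eta_k)=(\lambda\eta_1)\otimes\eta_2\otimes\cdots\otimes\eta_k$. For the separation property I would use that the linear span of $V^{\otimes k}$ is the algebraic tensor product of $k$ copies of $\mathcal{H}$, which is dense in $\mathcal{H}^{\otimes k}$; a vector orthogonal to every element of $V^{\otimes k}$ is thus orthogonal to a dense subspace and must vanish. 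This is exactly the remark made immediately before the corollary.

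Once the hypotheses hold, both assertions are immediate. For the lower bound, a normal state satisfies $\norm{\varphi}=1$, so Proposition \ref{PA}.(\ref{PA1}) gives $q_{K^{\otimes k}}(\varphi)\geq\norm{\varphi}=1$. For the equality case I would invoke Theorem \ref{TB} with $V=V^{\otimes k}$ and $K=K^{\otimes k}$: its condition (i), that $\varphi\in\overline{\mathrm{co}}^{\norm{\cdot}}\set{\omega_\xi:\,\xi\in V^{\otimes k}}$, is precisely the definition of separability from Definition \ref{D1}, whereas its condition (iii) reads $q_{K^{\otimes k}}(\varphi)=1$. The equivalence of (i) and (iii) therefore delivers exactly the claim that $q_{K^{\otimes k}}(\varphi)=1$ if and only if $\varphi$ is separable.

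Because every step is a quotation of a previously established result, I do not expect a real obstacle here; the mathematical substance sits entirely in the abstract theorems of Section \ref{SA}. The one point deserving a moment of care is the density argument underlying point separation, but this is a standard fact about Hilbert space tensor products and requires no new work.
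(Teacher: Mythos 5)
Your proof is correct and takes essentially the same route as the paper: the paper likewise notes that $V^{\otimes k}$ is a balanced subset of the unit sphere separating points of $\mathcal{H}^{\otimes k}$ and then derives the claim directly from Proposition \ref{PA}.(\ref{PA1}) (for the bound $q_{K^{\otimes k}}(\varphi)\geq\norm{\varphi}=1$) together with the equivalence of (i) and (iii) in Theorem \ref{TB}. You merely make explicit the verification of the hypotheses (unit norm, balancedness via absorbing a unimodular scalar, and point separation via density of the span of product vectors), which the paper states as a remark before the corollary.
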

\begin{proof}
The claim immediately follows from Proposition \ref{PA}.(\ref{PA1}) and Theorem \ref{TB}.
\end{proof}
\noindent It is shown in \cite{Arveson} that for finite dimensional Hilbert spaces Corollary \ref{CA} is the \emph{greatest cross-norm criterion} introduced by Rudolph \cite{Rudolph0,Rudolph,Rudolph2}. We will encounter such situations in the examples \ref{E1}-\ref{E2} below.\\   
In order to define separability for states of indistinguishable particles some conceptual and notational preparation is required.\\
\begin{definition}
Let $\pi\in S_k$ be a permutation of $k$ points. We denote the associated permutation operator on $\mathcal{H}^{\otimes k}$ by $U_{\pi}$. The orthogonal projections onto the bosonic and fermionic subspaces are given by
\begin{eqnarray*}
P_{+}:=\frac{1}{k!}\sum_{\pi\in S_k}{U_{\pi}}\quad\mbox{and}\quad P_{-}:=\frac{1}{k!}\sum_{\pi\in S_k}{\mathrm{sign}(\pi)U_{\pi}}\\
\end{eqnarray*}
and we write
$$\mathcal{H}^{\vee k}:=P_+\mathcal{H}^{\otimes k}\quad\mbox{and}\quad\mathcal{H}^{\wedge k}:=P_-\mathcal{H}^{\otimes k}\,.$$
Furthermore, for $\eta_1,...,\eta_k\in\mathcal{H}$ we define
$$\bigvee_{i=1}^k{\eta_i}:=\eta_1\vee...\vee\eta_k:=P_+\eta_1\otimes...\otimes\eta_k$$
and
$$\bigwedge_{i=1}^k{\eta_i}:=\eta_1\wedge...\wedge\eta_k:=\sqrt{k!}P_-\eta_1\otimes...\otimes\eta_k\,.$$
\end{definition}
\noindent It is a well-known fact that for all $\eta_i\in\mathcal{H}$ and $\pi\in S_k$ we have
$$U_{\pi}\left(\eta_1\vee...\vee\eta_k\right)=\eta_1\vee...\vee\eta_k\quad\mbox{and}\quad U_{\pi}\left(\eta_1\wedge...\wedge\eta_k\right)=\mbox{sign}(\pi)\eta_1\wedge...\wedge\eta_k\,.$$
Using the latter relation it is easy to prove that
$$\langle\xi_1\wedge...\wedge\xi_k,\eta_1\wedge...\wedge\eta_k\rangle=\det\left(\begin{array}{ccc}\langle\xi_1,\eta_1\rangle&\cdots&\langle\xi_k,\eta_1\rangle\\\vdots&&\vdots\\\langle\xi_1,\eta_k\rangle&\cdots&\langle\xi_k,\eta_k\rangle\end{array}\right)$$ 
for all $\xi_i,\eta_i\in\mathcal{H}$. In particular, this implies
$$\eta_1\wedge...\wedge\eta_k=0$$
whenever the vectors $\eta_1,...,\eta_k$ are linearly dependent.
\begin{lemma}\label{LA}
Let $\eta_1,...,\eta_k\in\mathcal{H}$.
\begin{enumerate}
\item\label{LA1}
There exists an orthogonal system $\{\eta_1',...,\eta_k'\}\subset\mathcal{H}$ such that
$$\bigwedge_{i=1}^k{\eta_i}=\bigwedge_{i=1}^k{\eta_i'}$$
and it can be chosen in such a way that $\norm{\eta_i'}\leq\norm{\eta_i}$ for all $1\leq i\leq k$.
\item\label{LA2}
We have the estimation
$$\norm{\bigwedge_{i=1}^k{\eta_i}}\leq\prod_{i=1}^k{\norm{\eta_i}}\,.$$
and equality is given if and only if $\{\eta_1,...,\eta_k\}$ is an orthogonal system.
\end{enumerate}
\end{lemma}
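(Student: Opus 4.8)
The plan is to establish~(\ref{LA1}) via the Gram--Schmidt process and to deduce~(\ref{LA2}) from it together with the determinant formula quoted above. For~(\ref{LA1}) I would set $\eta_1':=\eta_1$ and, inductively, take $\eta_i'$ to be the component of $\eta_i$ orthogonal to $\mathrm{span}\{\eta_1,\ldots,\eta_{i-1}\}=\mathrm{span}\{\eta_1',\ldots,\eta_{i-1}'\}$, that is $\eta_i'=\eta_i-P_{i-1}\eta_i$ with $P_{i-1}$ the orthogonal projection onto that subspace. By construction $\{\eta_1',\ldots,\eta_k'\}$ is an orthogonal system, and since $P_{i-1}\eta_i$ is orthogonal to $\eta_i'$, the Pythagorean identity $\norm{\eta_i}^2=\norm{P_{i-1}\eta_i}^2+\norm{\eta_i'}^2$ yields $\norm{\eta_i'}\leq\norm{\eta_i}$, with equality exactly when $\eta_i\perp\mathrm{span}\{\eta_1,\ldots,\eta_{i-1}\}$.

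The crux of~(\ref{LA1}) is that the wedge product is left unchanged: each $\eta_i'$ equals $\eta_i$ minus a linear combination of $\eta_1,\ldots,\eta_{i-1}$. Since $\bigwedge$ is multilinear (as $P_-$ is linear and $\sqrt{k!}$ is a constant) and alternating --- which follows from $U_\pi(\eta_1\wedge\cdots\wedge\eta_k)=\mathrm{sign}(\pi)\,\eta_1\wedge\cdots\wedge\eta_k$ for a transposition $\pi$, or directly from the linear-dependence remark preceding the lemma --- expanding $\eta_1\wedge\cdots\wedge\eta_i'\wedge\cdots\wedge\eta_k$ in the $i$-th slot produces the original product plus correction terms, each of which has some $\eta_l$ with $l<i$ repeated in two slots and therefore vanishes. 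Carrying this out for $i=1,\ldots,k$ gives $\bigwedge_{i=1}^k\eta_i=\bigwedge_{i=1}^k\eta_i'$.

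For~(\ref{LA2}) I would note that for an orthogonal system the Gram matrix $(\langle\eta_j',\eta_i'\rangle)_{i,j}$ is diagonal, so the determinant formula above gives $\norm{\bigwedge_{i=1}^k\eta_i'}^2=\prod_{i=1}^k\norm{\eta_i'}^2$. Combining this with~(\ref{LA1}) and the bounds $\norm{\eta_i'}\leq\norm{\eta_i}$ yields $\norm{\bigwedge_{i=1}^k\eta_i}=\prod_{i=1}^k\norm{\eta_i'}\leq\prod_{i=1}^k\norm{\eta_i}$, which is the asserted inequality. Equivalently, via the determinant formula this estimate is exactly Hadamard's inequality for the Gram determinant.

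It remains to treat the equality case, which I expect to be the main obstacle. One implication is immediate: if $\{\eta_i\}$ is orthogonal then $\eta_i'=\eta_i$ and equality holds. For the converse, assuming all $\eta_i\neq 0$, the equality $\prod_i\norm{\eta_i'}=\prod_i\norm{\eta_i}$ together with $\norm{\eta_i'}\leq\norm{\eta_i}$ and positivity of all factors forces $\norm{\eta_i'}=\norm{\eta_i}$ for every $i$; by the equality clause of the Pythagorean step this says $\eta_i\perp\mathrm{span}\{\eta_1,\ldots,\eta_{i-1}\}$ for all $i$, i.e.\ $\{\eta_1,\ldots,\eta_k\}$ is orthogonal. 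The point requiring care is the degenerate situation in which the $\eta_i$ are linearly dependent, so that some $\eta_j'=0$: there both sides of the estimate vanish, and the Gram--Schmidt bookkeeping, which skips the vanishing vectors when forming the projections $P_{i-1}$, must be arranged so that the identities of~(\ref{LA1}) continue to hold verbatim and the nonzero hypothesis needed for the equality analysis is made explicit.
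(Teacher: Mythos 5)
Your proposal is correct and takes essentially the same route as the paper: the paper's proof of part (\ref{LA1}) is just your Gram--Schmidt argument organized recursively (split each $\eta_i$ into its component along $\eta_1$ and the orthogonal remainder, note that the wedge kills the parallel parts, then iterate on the remaining vectors), and its proof of part (\ref{LA2}) is the same Gram-determinant computation, with the equality case argued in contrapositive form (not orthogonal implies some $\norm{\eta_i'}<\norm{\eta_i}$, hence strict inequality) rather than your direct form. One remark on your closing caveat: if the $\eta_i$ are linearly dependent but all nonzero, then only the wedge side of the estimate vanishes while $\prod_{i}\norm{\eta_i}>0$, so equality fails and that case is vacuous for your converse implication --- no extra bookkeeping is needed beyond your explicit nonzero assumption (the only genuinely degenerate case, some $\eta_i=0$, is a convention question about whether an orthogonal system may contain the zero vector, which the paper's own proof also leaves untouched).
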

\begin{proof}$\,$
\begin{enumerate}
\item
For every $i\in\{2,...,k\}$ there are elements $\eta_i^{(1)},\eta_i^{(2)}\in\mathcal{H}$ with $\eta_i^{(1)}\in\mathbbmss{C}\cdot\eta_1$ and $\eta_i^{(2)}\in(\mathbbmss{C}\cdot\eta_1)^{\bot}$ such that $\eta_i=\eta_i^{(1)}+\eta_i^{(2)}$. It follows
$$\bigwedge_{i=1}^k{\eta_i}=\eta_1\wedge\bigwedge_{i=2}^k{\left(\eta_i^{(1)}+\eta_i^{(2)}\right)}=\sum_{j_2=1}^2{...\sum_{j_k=1}^2{\underbrace{\eta_1\wedge\bigwedge_{i=2}^k{\eta_i^{(j_i)}}}_{=0\mbox{\footnotesize{ if }}j_i=1}}}=\eta_1\wedge\bigwedge_{i=2}^k{\eta_i^{(2)}}$$
and the vectors $\eta_i^{(2)}$ are orthogonal to $\eta_1$. If we set $\eta_1':=\eta_1$ and apply the same procedure to $\bigwedge_{i=2}^{k}{\eta_i^{(2)}}$ we iteratively obtain the vectors $\eta_1',...,\eta_k'$ as claimed. Due to $$\norm{\eta_2'}^2=\norm{\eta_2^{(2)}}^2=\norm{\eta_2}^2-\norm{\eta_2^{(1)}}^2\leq\norm{\eta_2}^2$$
one also iteratively validates the inequalities $\norm{\eta_i'}\leq\norm{\eta_i}$.
\item
By (i) there is an orthogonal system $\left\{\eta_1',...,\eta_k'\right\}\subset\mathcal{H}$ with $\bigwedge_{i=1}^k{\eta_i}=\bigwedge_{i=1}^k{\eta_i'}$ and $\norm{\eta_i'}\leq\norm{\eta_i}$. It follows
\begin{eqnarray*}
\norm{\eta_1\wedge...\wedge\eta_k}^2&=&\norm{\eta_1'\wedge...\wedge\eta_k'}^2=\det\left(\begin{array}{ccc}\langle\eta_1',\eta_1'\rangle&\cdots&\langle\eta_k',\eta_1'\rangle\\\vdots&&\vdots\\\langle\eta_1',\eta_k'\rangle&\cdots&\langle\eta_k',\eta_k'\rangle\end{array}\right)\\
&=&\norm{\eta_1'}^2\cdot...\cdot\norm{\eta_k'}^2\leq\norm{\eta_1}^2\cdot...\cdot\norm{\eta_k}^2
\end{eqnarray*}
which proves the inequality. Moreover, we see that
$$\norm{\eta_1\wedge...\wedge\eta_k}=\norm{\eta_1}\cdot...\cdot\norm{\eta_k}$$
if $\{\eta_1,...,\eta_k\}$ happens to be an orthogonal system. Conversely, if this is not the case then the procedure described in (i) allows to choose the orthogonal system $\{\eta_1',...,\eta_k'\}$ in such a way that at least one of the inequalities $\norm{\eta_i'}\leq\norm{\eta_i}$ is strict, so that the above inequality is strict as well.
\end{enumerate}
\end{proof}
\noindent In particular, Lemma \ref{LA} implies that the space $\mathcal{H}^{\wedge k}$ does not contain any nonzero product vectors. Therefore, the ''classical'' notion of entanglement of fermionic particles does not apply to this case.\\
\indent Whereas for fermionic particles an accepted definition of entanglement is found \cite{Eckert,Grabowski,Levay,Plastino,Schliemann,Zander} there is a still ongoing debate as to how entanglement of bosonic particles should be defined \cite{Eckert,Ghi02,Ghi04,Ghi05,Grabowski,Paskauskas,Li}. In the following we will concentrate on the definition of bosonic entanglement to which our techniques apply. 
\begin{definition}\label{D2}$\,$
\begin{enumerate}
\item\label{D21} 
The set of symmetric product vectors of $\mathcal{H}^{\vee k}$ is denoted by
$$V^{\vee k}:=\set{\eta\otimes...\otimes\eta:\,\eta\in\mathcal{H},\,\norm{\eta}=1}\,.$$
We say that a normal state $\varphi$ on $\mathcal{B}\left(\mathcal{H}^{\vee k}\right)$ is \emph{bosonic separable} if
$$\varphi\in\overline{\mathrm{co}}^{\norm{\cdot}}\set{\omega_{\xi}:\,\xi\in V^{\vee k}}\,.$$
\item\label{D22}
Similarly, we define 
$$V^{\wedge k}:=\set{\eta_1\wedge...\wedge\eta_k:\,\eta_i\in\mathcal{H},\,\norm{\eta_1\wedge...\wedge\eta_k}=1}$$
and call a normal state $\varphi$ on $\mathcal{B}\left(\mathcal{H}^{\wedge k}\right)$ \emph{fermionic separable} if
$$\varphi\in\overline{\mathrm{co}}^{\norm{\cdot}}\set{\omega_{\xi}:\,\xi\in V^{\wedge k}}\,.$$
\end{enumerate}
\end{definition}
\noindent Comparing the definitions \ref{D1}, \ref{D2}.(\ref{D21}) and \ref{D2}.(\ref{D22}) one recognize that these are just special cases of the more general concept discussed in Section \ref{SA}. In addition, the sets $V^{\otimes k}$, $V^{\vee k}$ and $V^{\wedge k}$ fulfill the following geometric relations which are of significant importance for entanglement quantification. 
\begin{proposition}\label{PC}
According to Theorem \ref{TA}.(\ref{TA2}) let us define the numbers
$$\mu_{\pm}:=\sup\left\{r\geq 1:\,rP_{\pm}V^{\otimes k}\subset B_{1,\mathcal{H}^{\otimes k}}\right\}\,.$$
\begin{enumerate}
\item\label{PC1} 
We have $\mu_+=1$ and $\mu_-=\sqrt{k!}$ as well as
$$ V^{\vee k}=\left(P_{+}V^{\otimes k}\right)\cap S_{1,\mathcal{H}^{\otimes k}}\quad\mbox{and}\quad V^{\wedge k}=\left(\sqrt{k!}P_{-}V^{\otimes k}\right)\cap S_{1,\mathcal{H}^{\otimes k}}\,.$$
\item\label{PC2}
The projection $P_-$ is compatible with the pair $(V^{\otimes k},\mu_-)$ in the sense of Definition \ref{D5} whereas $P_+$ is not compatible with $(V^{\otimes k},\mu_+)$.
\item\label{PC3}
The mappings
$$\mathcal{B}\left(\mathcal{H}^{\otimes k}\right)\ni x\mapsto P_{\pm}xP_{\pm}\in\mathcal{B}\left(\mathcal{H}^{\otimes k}\right)$$
are $\norm{\cdot}_{K^{\otimes k}}$-contractions (compare Definition \ref{D4} and Corollary \ref{CA}).
\end{enumerate}
\end{proposition}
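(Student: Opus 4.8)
The plan is to treat the three items in turn, reducing each to the projection identities for $P_\pm$ together with the norm estimate of Lemma \ref{LA}. Throughout I write a generic element of $V^{\otimes k}$ as $\xi=\eta_1\otimes\cdots\otimes\eta_k$ with $\norm{\eta_i}=1$, so that $P_+\xi=\eta_1\vee\cdots\vee\eta_k$ and $P_-\xi=\frac{1}{\sqrt{k!}}\,\eta_1\wedge\cdots\wedge\eta_k$.

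For (\ref{PC1}) I would first compute the two constants. Since $P_\pm$ are orthogonal projections, $\norm{P_\pm\xi}\leq\norm{\xi}=1$. Choosing $\eta_1=\cdots=\eta_k=\eta$ gives $\norm{P_+\xi}=1$, so the supremum defining $\mu_+$ forces $\mu_+=1$. In the antisymmetric case Lemma \ref{LA}.(\ref{LA2}) gives $\norm{P_-\xi}=\frac{1}{\sqrt{k!}}\norm{\eta_1\wedge\cdots\wedge\eta_k}\leq\frac{1}{\sqrt{k!}}$, with equality precisely when the $\eta_i$ form an orthonormal system; hence $\sup_{\xi\in V^{\otimes k}}\norm{P_-\xi}=\frac{1}{\sqrt{k!}}$ and therefore $\mu_-=\sqrt{k!}$. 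The two set identities then follow by identifying which projected vectors have norm one. For $P_+$, equality $\norm{P_+\xi}=1$ means $\xi$ lies in the symmetric subspace, and a short tensor-algebra argument (comparing $\xi$ with $U_\tau\xi$ for transpositions $\tau$) shows that a symmetric unit product vector has all factors proportional, i.e. equals a phase times $\eta^{\otimes k}$; absorbing the phase places it in $V^{\vee k}$, and the reverse inclusion is immediate. For $P_-$ the description $\sqrt{k!}P_-V^{\otimes k}=\set{\eta_1\wedge\cdots\wedge\eta_k:\norm{\eta_i}=1}$ combined with Lemma \ref{LA}.(\ref{LA1}) — which rewrites any norm-one wedge as the wedge of an orthonormal, hence unit, system — yields $V^{\wedge k}=\left(\sqrt{k!}P_-V^{\otimes k}\right)\cap S_{1,\mathcal{H}^{\otimes k}}$.

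For (\ref{PC2}) I would feed the identities of (\ref{PC1}) into Definition \ref{D5}. Since $V_{P_-,\mu_-}=V^{\wedge k}$, compatibility reduces to checking that for $\xi=\eta_1\otimes\cdots\otimes\eta_k$ with $P_-\xi\neq0$ the normalized vector $\frac{P_-\xi}{\norm{P_-\xi}}=\frac{\eta_1\wedge\cdots\wedge\eta_k}{\norm{\eta_1\wedge\cdots\wedge\eta_k}}$ again lies in $V^{\wedge k}$; by multilinearity of the wedge this normalized vector is once more a wedge of norm one, so it does. For the non-compatibility of $P_+$ it suffices to exhibit one $\xi$ violating the condition: taking orthonormal $e_1\perp e_2$ and $\xi=e_1\otimes e_2\otimes\cdots\otimes e_2$, the projection $P_+\xi$ is nonzero, but its normalization is a nontrivial symmetric superposition of tensor rank greater than one, hence not a product vector and therefore not of the form $\eta^{\otimes k}$, i.e. not in $V_{P_+,\mu_+}=V^{\vee k}$.

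For (\ref{PC3}) I would use the representation $\norm{x}_{K^{\otimes k}}=\sup\set{|\langle x\xi,\eta\rangle|:\xi,\eta\in V^{\otimes k}}$ from Proposition \ref{PA}.(\ref{PA2}) and the self-adjointness of $P_\pm$ to write $\langle P_\pm xP_\pm\xi,\eta\rangle=\langle xP_\pm\xi,P_\pm\eta\rangle$. Expanding $P_\pm=\frac{1}{k!}\sum_\pi c_\pi U_\pi$ with $|c_\pi|=1$ and using the key observation that each $U_\pi\xi$ is again a unit product vector in $V^{\otimes k}$, I would estimate
\[
|\langle xP_\pm\xi,P_\pm\eta\rangle|\leq\frac{1}{(k!)^2}\sum_{\pi,\sigma}|\langle xU_\pi\xi,U_\sigma\eta\rangle|\leq\frac{(k!)^2}{(k!)^2}\norm{x}_{K^{\otimes k}}=\norm{x}_{K^{\otimes k}}\,,
\]
the sign factors being discarded by the triangle inequality; taking the supremum over $\xi,\eta\in V^{\otimes k}$ gives $\norm{P_\pm xP_\pm}_{K^{\otimes k}}\leq\norm{x}_{K^{\otimes k}}$, which is exactly the contraction hypothesis needed for Theorem \ref{TA}.(\ref{TA1}). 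The routine parts are the constant bookkeeping and the group-averaging estimate in (\ref{PC3}); the one genuinely delicate point is the pair of set equalities in (\ref{PC1}), where establishing that a symmetric unit product vector collapses to $\eta^{\otimes k}$, and conversely normalizing an arbitrary wedge via Lemma \ref{LA}.(\ref{LA1}), must be argued carefully rather than by formal manipulation.
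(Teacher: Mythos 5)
Your proposal is correct and follows essentially the same route as the paper's proof: Lemma \ref{LA} for the constants $\mu_{\pm}$ and the two set identities, multilinearity of the wedge for the compatibility of $P_-$, an explicit non-product symmetric vector for the incompatibility of $P_+$, and averaging over the permutation group for the contraction property. The only cosmetic differences are that you identify symmetric unit product vectors via the norm-equality property of orthogonal projections (the paper instead invokes strict convexity of the Hilbert unit ball applied to the convex combination $\frac{1}{k!}\sum_{\pi}U_{\pi}\xi$) and that you run the averaging estimate at the level of matrix elements $\langle xU_{\pi}\xi,U_{\sigma}\eta\rangle$ rather than via the $\norm{\cdot}_{K^{\otimes k}}$-isometries $x\mapsto U_{\pi}xU_{\sigma}$; these are equivalent formulations of the same argument.
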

\begin{proof}$\,$
\begin{enumerate}
\item 
We begin by proving the statements about $\mu_+$ and $V^{\vee k}$. Clearly, we have $V^{\vee k}\subset V^{\otimes k}$ and every vector $\eta\in V^{\vee k}$ is a fixed-point of $P_+$. Hence, $\mu_+=1$ and $V^{\vee k}\subset (P_+V^{\otimes k})\cap S_{1,\mathcal{H}^{\otimes k}}$. Conversely, for $\eta=\eta_1\otimes...\otimes\eta_k\in V^{\otimes k}$ we have
\[P_+\eta=\frac{1}{k!}\sum_{\pi\in S_k}U_{\pi}\eta=\sum_{\pi\in S_k}{\frac{1}{k!}\cdot\eta_{\pi(1)}\otimes...\otimes\eta_{\pi(k)}}\,;\]
that is $P_+\eta$ is a convex combination of unit vectors. Since the unit ball of a Hilbert space is strictly convex it follows that $P_+\eta$ is a unit vector if and only if $\eta=U_{\pi}\eta$ for all $\pi\in S_k$ which implies $\eta\in V^{\vee k}$.\\
\noindent Let us now consider the fermionic analogue. Using Lemma \ref{LA}.(\ref{LA2}) we find for $\eta_1\otimes...\otimes\eta_k\in V^{\otimes k}$ that
$$\norm{\sqrt{k!}P_-\eta_1\otimes...\otimes\eta_k}=\norm{\eta_1\wedge...\wedge\eta_k}\leq\norm{\eta_1}\cdot...\cdot\norm{\eta_k}=1$$
and we have equality if and only if $\{\eta_1,...,\eta_k\}$ is an orthogonal system. This proves $\mu_-=\sqrt{k!}$. On the other hand, for a given unit vector $\eta_1\wedge...\wedge\eta_k\in V^{\wedge k}$ Lemma \ref{LA}.(\ref{LA1}) allows to choose an orthogonal system $\{\eta_1',...,\eta_k'\}\subset\mathcal{H}$ with
$$\eta_1\wedge...\wedge\eta_k=\eta_1'\wedge...\wedge\eta_k'\,.$$
Due to
$$1=\norm{\eta_1\wedge...\wedge\eta_k}=\norm{\eta_1'\wedge...\wedge\eta_k'}=\norm{\eta_1'}\cdot...\cdot\norm{\eta_k'}$$
and using that $\left\{\frac{\eta_1'}{\norm{\eta_1'}},...,\frac{\eta_k'}{\norm{\eta_k'}}\right\}$ is an orthonormal system it follows
$$\eta_1\wedge...\wedge\eta_k=\frac{\eta_1'}{\norm{\eta_1'}}\wedge...\wedge\frac{\eta_k'}{\norm{\eta_k'}}=\sqrt{k!}P_-\frac{\eta_1'}{\norm{\eta_1'}}\otimes..\otimes\frac{\eta_k'}{\norm{\eta_k'}}\in\sqrt{k!}P_-V^{\otimes k}$$
so that $V^{\wedge k}\subset\left(\sqrt{k!}P_-V^{\otimes k}\right)\cap S_{1,\mathcal{H}}$. The converse inclusion is trivial.
\item
If $\eta=\eta_1\otimes...\otimes\eta_k\in V^{\otimes k}$ with $P_-\eta\neq 0$, then 
$$\frac{P_-\eta}{\norm{P_-\eta}}=\frac{\eta_1\wedge...\wedge\eta_k}{\norm{\eta_1\wedge...\wedge\eta_k}}\in V^{\wedge k}$$
by definition of $V^{\wedge k}$ so that $P_-$ is compatible with $(V^{\otimes k},\mu_-)$.\\
In order to see that $P_+$ is not compatible with $(V^{\otimes k},\mu_+)$ it is sufficient to consider the case $k=2$. For example, choosing two linearly independent unit vectors $\xi,\eta\in\mathcal{H}$ it is clear that $P_+(\xi\otimes \eta)=\frac{1}{2}\left(\xi\otimes \eta+\eta\otimes \xi\right)$ is nonzero but no product vector. In particular, if rescaled to unit length it cannot belong to $V^{\vee 2}$.
\item
For every element $\pi\in S_k$ the permutation operator $U_{\pi}$ is a bijection of $V^{\otimes k}$ onto itself. Thus, for all $\pi,\sigma\in S_k$ the mapping
$$\mathcal{B}\left(\mathcal{H}^{\otimes k}\right)\ni x\mapsto U_{\pi}xU_{\sigma}\in\mathcal{B}\left(\mathcal{H}^{\otimes k}\right)$$
is $\norm{\cdot}_{K^{\otimes k}}$-isometric since for every $x\in\mathcal{B}\left(\mathcal{H}^{\otimes k}\right)$ we have
\begin{eqnarray*}
\norm{U_{\pi}xU_{\sigma}}_{K^{\otimes k}}&=&\sup\left\{\left|\left\langle U_{\pi}xU_{\sigma}\xi,\eta\right\rangle\right|:\,\xi,\eta\in V^{\otimes k}\right\}\\
&=&\sup\left\{\left|\left\langle xU_{\sigma}\xi,U_{\pi}^{\ast}\eta\right\rangle\right|:\,\xi,\eta\in V^{\otimes k}\right\}\\
&=&\sup\left\{\left|\left\langle x\xi,\eta\right\rangle\right|:\,\xi,\eta\in V^{\otimes k}\right\}\\
&=&\norm{x}_{K^{\otimes k}}\,.
\end{eqnarray*}
It follows
\begin{eqnarray*}
\norm{P_{\pm}xP_{\pm}}_{K^{\otimes k}}\leq\frac{1}{(k!)^2}\sum_{\pi,\sigma\in S_k}{\underbrace{\norm{U_{\pi}xU_{\sigma}}_{K^{\otimes k}}}_{=\norm{x}_{K^{\otimes k}}}}=\norm{x}_{K^{\otimes k}}\,.
\end{eqnarray*}
\end{enumerate}
\end{proof}
\noindent Using Proposition \ref{PC} and by analogy with Corollary \ref{CA} we arrive at the following result concerning entanglement quantification of indistinguishable particles. 
\begin{proposition}\label{PD}
Let $K^{\vee k}$ and $K^{\wedge k}$ denote the norm-closed convex hulls of the sets 
$$\left\{\omega_{\xi}:\,\xi\in V^{\vee k}\right\}\quad\mbox{and}\quad \left\{\omega_{\xi}:\,\xi\in V^{\wedge k}\right\}$$
respectively. Then for normal states $\varphi_+\in\mathcal{B}\left(\mathcal{H}^{\vee k}\right)_{\ast}$ and $\varphi_-\in\mathcal{B}\left(\mathcal{H}^{\wedge k}\right)_{\ast}$ we have the following equivalences:
\begin{enumerate}
\item\label{PD1} 
We have $q_{K^{\vee k}}(\varphi_+)\geq 1$ and $q_{K^{\vee k}}(\varphi_+)= 1$ if and only if $\varphi_+$ is bosonic separable.
\item\label{PD2}
We have $q_{K^{\wedge k}}(\varphi_-)\geq 1$ and $q_{K^{\wedge k}}(\varphi_-)= 1$ if and only if $\varphi_-$ is fermionic separable.
\end{enumerate}
Furthermore, considering $\varphi_-$ as a normal state on $\mathcal{B}\left(\mathcal{H}^{\otimes k}\right)$ via 
$$\mathcal{B}\left(\mathcal{H}^{\wedge k}\right)\cong P_-\mathcal{B}\left(\mathcal{H}^{\otimes k}\right)P_-\subset\mathcal{B}\left(\mathcal{H}^{\otimes k}\right)$$
we have
$$q_{K^{\wedge k}}(\varphi_-)=\frac{1}{k!}\cdot q_{K^{\otimes k}}(\varphi_-)\,.$$
In particular, $q_{K^{\otimes k}}(\varphi_-)\geq k!$ and $q_{K^{\otimes k}}(\varphi_-)=k!$ if and only if $\varphi_-$ is fermionic separable.
\end{proposition}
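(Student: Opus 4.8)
The plan is to handle the three assertions in turn, getting (i) and (ii) for free from the abstract apparatus of Section \ref{SA} and reserving the actual work for the equality relating $q_{K^{\wedge k}}$ and $q_{K^{\otimes k}}$. For (i) and (ii), I would first check that the pairs $(\mathcal{H}^{\vee k},V^{\vee k})$ and $(\mathcal{H}^{\wedge k},V^{\wedge k})$ meet the hypotheses of Definition \ref{D4}: both sets consist of unit vectors, both are balanced (a phase $\lambda$ with $|\lambda|=1$ is absorbed into a single tensor factor, using a $k$-th root in the symmetric case), and both span their ambient Hilbert spaces and hence separate points. Once this is in place, Theorem \ref{TB} together with Proposition \ref{PA}.(\ref{PA1}) applies word for word as in the proof of Corollary \ref{CA}, yielding $q_{K^{\vee k}}(\varphi_+)\geq 1$ with equality exactly for bosonic separable states, and the analogous statement in the fermionic case.

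The core of the proposition is the identity $q_{K^{\wedge k}}(\varphi_-)=\tfrac{1}{k!}\,q_{K^{\otimes k}}(\varphi_-)$, which I would obtain by invoking \emph{both} halves of Theorem \ref{TA} for the projection $p=P_-$ with the single choice $\lambda=\mu=\sqrt{k!}$. Proposition \ref{PC}.(\ref{PC1}) identifies $V_{P_-,\sqrt{k!}}=V^{\wedge k}$, hence $K_{P_-,\sqrt{k!}}=K^{\wedge k}$, and also records $\mu_-=\sqrt{k!}$. I would then verify the hypotheses of each half: Proposition \ref{PC}.(\ref{PC3}) furnishes the $\norm{\cdot}_{K^{\otimes k}}$-contraction property of $x\mapsto P_-xP_-$ required by Theorem \ref{TA}.(\ref{TA1}), while Proposition \ref{PC}.(\ref{PC2}) furnishes the compatibility of $P_-$ with $(V^{\otimes k},\sqrt{k!})$, which both drives Theorem \ref{TA}.(\ref{TA2}) and, via its first assertion, guarantees that $V^{\wedge k}$ separates points of $\mathcal{H}^{\wedge k}$ (the separation hypothesis of \ref{TA}.(\ref{TA1})). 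With these in hand, Theorem \ref{TA}.(\ref{TA12}) gives $q_{K^{\wedge k}}(\varphi_-)\geq\tfrac{1}{k!}\,q_{K^{\otimes k}}(\varphi_-)$ and Theorem \ref{TA}.(\ref{TA22}) gives the reverse inequality, so the two collapse to equality. The ''in particular'' clause then follows by combining this identity with (ii): since $q_{K^{\wedge k}}(\varphi_-)\geq 1$, we get $q_{K^{\otimes k}}(\varphi_-)=k!\,q_{K^{\wedge k}}(\varphi_-)\geq k!$, with equality precisely when $q_{K^{\wedge k}}(\varphi_-)=1$, i.e. when $\varphi_-$ is fermionic separable.

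The delicate point — and the conceptual crux — is that a \emph{single} constant $\sqrt{k!}$ plays the roles of both $\lambda$ and $\mu$, so the upper and lower bounds of Theorem \ref{TA} close onto each other and produce an exact factor rather than a two-sided estimate. This rests entirely on the equality $\mu_-=\sqrt{k!}$ from Proposition \ref{PC}.(\ref{PC1}), expressing that $\sqrt{k!}\,P_-$ sends orthonormal product vectors to unit wedge vectors with no loss. I would also take care that the domains of the two inequalities are reconciled: Theorem \ref{TA}.(\ref{TA12}) is stated for $\psi\in\mathcal{B}(P_-\mathcal{H}^{\otimes k})_{\ast}$ and \ref{TA}.(\ref{TA22}) for $\psi\in\mathcal{B}(\mathcal{H}^{\otimes k})_{\ast}$, and it is exactly the identification $\mathcal{B}(\mathcal{H}^{\wedge k})\cong P_-\mathcal{B}(\mathcal{H}^{\otimes k})P_-$ together with $\varphi_-(x)=\varphi_-(P_-xP_-)$ that lets one state $\varphi_-$ inhabit both. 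By contrast, no such exact relation can hold for bosons, since Proposition \ref{PC}.(\ref{PC2}) shows $P_+$ fails to be compatible and Theorem \ref{TA}.(\ref{TA2}) is therefore unavailable.
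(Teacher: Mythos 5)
Your proposal is correct and follows essentially the same route as the paper: (i) and (ii) are obtained from Theorem \ref{TB} after checking that $V^{\vee k}$ and $V^{\wedge k}$ are balanced, separating sets of unit vectors, and the factor-$k!$ identity comes from applying both halves of Theorem \ref{TA} to $p=P_-$ with $\lambda=\mu_-=\sqrt{k!}$, using Proposition \ref{PC}.(\ref{PC1})--(\ref{PC3}) exactly as the paper does. The only cosmetic difference is that for the bosonic separation property you assert that $V^{\vee k}$ spans $\mathcal{H}^{\vee k}$, whereas the paper cites the density of the linear hull of $V^{\vee k}$ (Davies, \cite{OpenQuantumSystems}); this is the same fact, so no gap results.
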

\begin{proof}
Both sets $V^{\wedge k}$ and $V^{\vee k}$ are balanced subsets of the corresponding unit spheres. Moreover, since $P_-$ is compatible with $(V^{\otimes k},\mu_-)$ by Proposition \ref{PC}.(\ref{PC2}) the set $V^{\wedge k}$ separates points of $\mathcal{H}^{\wedge k}$ according to Theorem \ref{TA}.(\ref{TA2}). It is shown in \cite{OpenQuantumSystems} (compare Lemma 1.1 of Chapter 8) that the linear hull of $V^{\vee k}$ is a dense subspace of $\mathcal{H}^{\vee k}$. Hence, $V^{\vee k}$ separates points of $\mathcal{H}^{\vee k}$. The statements (\ref{PD1}) and (\ref{PD2}) then follow from Theorem \ref{TB}. Moreover, in the fermionic case by Proposition \ref{PC}.(\ref{PC1}) we have $\mu_-=\sqrt{k!}$ and be applying Theorem \ref{TA}.\ref{TA12} and \ref{TA}.\ref{TA22} with $\lambda=\sqrt{k!}$ we obtain
$$\frac{1}{k!}\cdot q_{K^{\otimes k}}(\varphi_-)\leq q_{K^{\wedge k}}(\varphi_-)\leq\frac{1}{k!}q_{K^{\otimes k}}(\varphi_-)\,.$$ 
\end{proof}
\noindent Let us consider some simple but instructive examples for finite dimensional situations; that is for $n:=\dim\,\mathcal{H}<\infty$ and therefore $\mathcal{B}(\mathcal{H})\cong M_n(\C)$. The quantification of entanglement of distinguishable particles in terms of the Minkowski functional $q_{K^{\otimes k}}$ then proceeds as follows.\\
\indent With respect to the trace norm $\norm{\cdot}_1$ on $M_n(\C)$ we can construct the \emph{greatest cross norm} or \emph{projective tensor norm} $\norm{\cdot}_{\pi}$ on the space $\left(M_n(\C)\right)^{\otimes k}$ (compare for example \cite{Ryan}). As was demonstrated by Arveson (compare Theorem 9.1 in \cite{Arveson}) if $\varphi$ is a (normal) state on $\left(M_n(\C)\right)^{\otimes k}$ with density operator $\rho\in\left(M_n(\C)\right)^{\otimes k}$ then $q_{K^{\otimes k}}(\varphi)=\norm{\rho}_{\pi}$.
\begin{example}\label{E1}
\end{example}
\noindent Consider the Hilbert space $\mathbbmss{C}^2\otimes\mathbbmss{C}^2$ and the vector state $\omega_{\xi}$ on $M_2(\mathbbmss{C})\otimes M_2(\C)$ where
$$\xi:=\frac{1}{\sqrt{2}}(e_1\otimes e_2-e_2\otimes e_1)=e_1\wedge e_2$$
for an orthonormal basis $\{e_1,e_2\}$ of $\mathbbmss{C}^2$ (a so called \emph{singlet state}). Because of $\dim\,\C^2\wedge\C^2=1$ and since $\xi\neq 0$ it is clear that $\C^2\wedge\C^2$ is generated by $\xi$. In particular, there are no fermionic entangled states on $\mathcal{B}(\C^2\wedge\C^2)$. Indeed, using the known fact that $q_{K^{\otimes 2}}(\omega_{\xi})=2$ (compare Theorem 14.1 in \cite{Arveson}) and Proposition \ref{PD} we have
$$q_{K^{\wedge 2}}(\omega_{\xi})=\frac{1}{2!}\cdot q_{K^{\otimes 2}}(\omega_{\xi})=1\,.$$
so that $\omega_{\xi}$ is indeed fermionic separable.
\begin{example}
\end{example}
\noindent We have seen in the above example that for dimensional reasons there are no fermionic entangled states on $\mathcal{B}(\C^2\wedge\C^2)$. However, using elementary algebra one can prove that every vector $\xi\in\C^3\wedge\C^3$ is of the form $\eta\wedge\zeta$ for appropriate vectors $\eta,\zeta\in\C^3$ as well. This means that there are no fermionic entangled states on $\mathcal{B}(\C^3\wedge\C^3)$ too. It may be instructive to reproduce this fact by using the methods we developed so far.\\
\indent To this end let $\xi\in\mathbbmss{C}^3\wedge\mathbbmss{C}^3$ be a unit vector and let $\{e_1,e_2,e_3\}$ be an orthonormal basis of $\mathbbmss{C}^3$. Then there are numbers $a,b,c\in\mathbbmss{C}$ such that
\begin{eqnarray*}
\xi&=&a\cdot e_1\wedge e_2+b\cdot e_1\wedge e_3+c\cdot e_2\wedge e_3\\
&=&e_1\otimes\frac{1}{\sqrt{2}}(ae_2+be_3)+e_2\otimes\frac{1}{\sqrt{2}}(-ae_1+ce_3)+e_3\otimes\frac{1}{\sqrt{2}}(-be_1-ce_2)\,.
\end{eqnarray*}
Defining the matrix
$$A:=\left(\begin{array}{ccc}0&a&b\\-a&0&c\\-b&-c&0\end{array}\right)$$
we obtain
$$A^{\ast}A=\mathbbmss{1}_3-\left(\begin{array}{c}c\\-b\\a\end{array}\right)\left(\overline{c},-\overline{b},\overline{a}\right)\,.$$
Since $\left({c},-{b},{a}\right)^T$ is a unit vector the matrix $A^{\ast}A$ is the orthogonal projection of rank 2 onto the orthogonal complement of $\left({c},-{b},{a}\right)^T$. Therefore, the theorems 7.2 and 8.2 in \cite{Arveson} and Proposition 7 in \cite{Rudolph} imply that
\begin{eqnarray*}
q_{K^{\otimes 2}}(\omega_{\xi})=\norm{t_{\xi,\xi}}_{\pi}={\norm{\frac{1}{\sqrt{2}}A}_1}^2=\frac{1}{{2}}\left(\mbox{tr}\sqrt{A^{\ast}A}\right)^2=\frac{2^2}{2}=2\,.
\end{eqnarray*}
Like in the above example we obtain
$$q_{K^{\wedge 2}}(\omega_{\xi})=\frac{1}{2}\cdot q_{K^{\otimes 2}}(\omega_{\xi})=1\,.$$
Thus, $\omega_{\xi}$ is fermionic separable, i.e. there are vectors $\eta,\zeta\in\C^3$ with $\xi=\eta\wedge\zeta$.
\begin{example}\label{E2}
\end{example}
\noindent Let $\mathcal{H}=\mathbbmss{C}^n$ and $\varphi_{\mbox{\footnotesize{tr}}}$ be the tracial (or chaotic) state on $\mathcal{B}\left(\left(\C^n\right)^{\wedge k}\right)$. In particular $k\geq n$. For an arbitrary orthonormal basis $\{e_1,...,e_n\}$ of $\mathbbmss{C}^n$ we have
$$\varphi_{\mbox{\footnotesize{tr}}}=\left(\begin{array}{c}n\\k\end{array}\right)^{-1}\sum_{1\leq i_1\leq...\leq i_k\leq n}{\omega_{e_{i_1}\wedge...\wedge e_{i_k}}}\in\mbox{co}\left\{\omega_{\xi}:\,\xi\in V^{\wedge k}\right\}\subset K^{\wedge k}\,.$$
Hence, $\varphi_{\mbox{\footnotesize{tr}}}$ is fermionic separable and $q_{K^{\otimes k}}(\varphi_{\mbox{\footnotesize{tr}}})$ can be computed according to
$$q_{K^{\otimes k}}(\varphi_{\mbox{\footnotesize{tr}}})=k!\cdot q_{K^{\wedge k}}(\varphi_{\mbox{\footnotesize{tr}}})=k!$$
by Proposition \ref{PD}. This result has been obtained earlier by Maassen \cite{Maassen} using different methods.

\section{Entanglement related to Schmidt and Slater numbers}\label{SC}
In addition to the concepts discussed in the previous section several other notions of entanglement are in use which aim at a somewhat finer entanglement classification. We concentrate on fermionic systems since our techniques work most efficiently in this case. Nevertheless, several aspects can partly be transferred to bosonic particles.\\
\indent In this section we will assume that $\mathcal{H}$ is infinite dimensional in order to avoid unnecessary complications when discussing the various ranks of vectors $\xi\in\mathcal{H}$ below. This is not really a restriction since every finite dimensional Hilbert space can be considered as a subspace of $\mathcal{H}$.
\begin{definition}\label{D3}
Let $\xi\in\mathcal{H}^{\otimes k}$, $\xi_-\in\mathcal{H}^{\wedge k}$ be unit vectors and $\varphi\in\mathcal{B}\left(\mathcal{H}^{\otimes k}\right)_{\ast}$, $\varphi_-\in\mathcal{B}\left(\mathcal{H}^{\wedge k}\right)_{\ast}$ normal states. 
\begin{enumerate}
\item\label{D31}
If $\xi$ can be expressed as a linear combination vectors of $V^{\otimes k}$ we define the \emph{Schmidt rank} of $\xi$ by
$$\mathrm{SR}(\xi):=\min\left\{l\in\N:\,\xi=\sum_{i=1}^l{\lambda_i\eta_i}:\,\eta_i\in V^{\otimes k},\,\lambda_i\in\C\right\}\,.$$
Otherwise, we put $\mathrm{SR}(\xi):=\infty$. The set of unit vectors having Schmidt rank smaller or equal to a number $l\in\N$ is denoted by $V^{\otimes k}_{l}$.
\item
If $\varphi\in\bigcup_{l\in\N}\overline{\mathrm{co}}^{\norm{\cdot}}\set{\omega_{\xi}:\,\xi\in V^{\otimes k}_l}$ the \emph{Schmidt number} of $\varphi$ is defined by
$$\mathrm{SN}(\varphi):=\min\left\{l\in\N:\,\varphi\in\overline{\mathrm{co}}^{\norm{\cdot}}\set{\omega_{\xi}:\,\xi\in V^{\otimes k}_l}\right\}$$
and $\mathrm{SN}(\varphi):=\infty$ if this is not the case.
\item
Analogously, the \emph{Slater rank} $\mathrm{SLR}(\xi_-)$ and the \emph{Slater number} $\mathrm{SLN}(\varphi_-)$ are defined by replacing the set $V^{\otimes k}$ in (i) and (ii) by $V^{\wedge k}$.
\end{enumerate}
\end{definition}
\noindent Note that by the definitions \ref{D1}, \ref{D2}.(\ref{D22}) and \ref{D3}.(\ref{D31}) we have $V^{\otimes k}=V_1^{\otimes k}$ and $V^{\wedge k}=V_1^{\wedge k}$.\\
\indent Schmidt and Slater numbers of normal states in the various contexts are rough measure of how entangled a given normal state is (compare for example \cite{Eckert, Johnston, Schliemann, Terhal}). For bipartite finite dimensional quantum systems Johnston \cite{Johnston} has constructed a cross norm based measure for the detection of states having Schmidt number smaller or equal to a number $l\in\N$ by using the concept of $l$-block positivity. A comparison of this construction to Proposition \ref{PA}.(\ref{PA2}) and (\ref{PA3}) reveals that this is a special case of the following result.
\begin{proposition}\label{PF}
For $l\in\N$ let $K^{\otimes k}_l$ and $K^{\wedge k}_l$ denote the norm-closed convex hulls of the sets
$$\left\{\omega_{\xi,\eta}:\,\xi,\eta\in V^{\otimes k}_l\right\}\quad\mbox{and}\quad\left\{\omega_{\xi,\eta}:\,\xi,\eta\in V^{\wedge k}_l\right\}$$
respectively.
\begin{enumerate}
\item
For a normal state $\varphi\in\mathcal{B}(\mathcal{H}^{\otimes k})_{\ast}$ we have $q_{K^{\otimes k}_l}(\varphi)\geq 1$ and $q_{K^{\otimes k}_l}(\varphi)=1$ if and only if $\mathrm{SN}(\varphi)\leq l$.
\item
For a normal state $\varphi_-\in\mathcal{B}(\mathcal{H}^{\wedge k})_{\ast}$ we have $q_{K^{\wedge k}_l}(\varphi_-)\geq 1$ and $q_{K^{\wedge k}_l}(\varphi_-)=1$ if and only if $\mathrm{SLN}(\varphi_-)\leq l$.
\end{enumerate}
\end{proposition}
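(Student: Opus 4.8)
The plan is to recognize both parts as immediate instances of Theorem \ref{TB}, applied to the sets $V^{\otimes k}_l$ and $V^{\wedge k}_l$ in the role of the generating set $V$ from Definition \ref{D4}. The definitions of Schmidt and Slater number are precisely statements of membership in $\overline{\mathrm{co}}^{\norm{\cdot}}\{\omega_{\xi}:\,\xi\in V^{\otimes k}_l\}$ and $\overline{\mathrm{co}}^{\norm{\cdot}}\{\omega_{\xi}:\,\xi\in V^{\wedge k}_l\}$, which is exactly condition (i) of Theorem \ref{TB}, while the functionals $q_{K^{\otimes k}_l}$ and $q_{K^{\wedge k}_l}$ are built from these very sets as prescribed in Definition \ref{D4}. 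Hence, once the hypotheses of Definition \ref{D4} are checked, the equivalence ``$q_K=1$ iff membership'' can be read off directly; the membership is by definition ``$\mathrm{SN}(\varphi)\leq l$'' (resp. ``$\mathrm{SLN}(\varphi_-)\leq l$''), since the sets $V^{\otimes k}_{l'}$ and $V^{\wedge k}_{l'}$ are increasingly nested in $l'$, so membership for some $l'\leq l$ is equivalent to membership for $l$ itself.

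The work therefore reduces to verifying that $V^{\otimes k}_l$ and $V^{\wedge k}_l$ are balanced sets of unit vectors separating points of $\mathcal{H}^{\otimes k}$ and $\mathcal{H}^{\wedge k}$, respectively. They consist of unit vectors by construction. For balancedness I would note that multiplying a vector by a phase $\lambda$ with $|\lambda|=1$ leaves its Schmidt (resp. Slater) rank unchanged, because $\lambda$ can be absorbed into one of the scalar coefficients $\lambda_i$ of a minimal decomposition; thus the rank-$\leq l$ conditions are phase-invariant. For the separation property I would exploit the inclusions $V^{\otimes k}=V^{\otimes k}_1\subseteq V^{\otimes k}_l$ and $V^{\wedge k}=V^{\wedge k}_1\subseteq V^{\wedge k}_l$ noted after Definition \ref{D3}: since a superset of a separating set separates points a fortiori, it suffices that $V^{\otimes k}$ separates points of $\mathcal{H}^{\otimes k}$ (as used already in Corollary \ref{CA}) and that $V^{\wedge k}$ separates points of $\mathcal{H}^{\wedge k}$ (established in the proof of Proposition \ref{PD} via the compatibility of $P_-$ and Theorem \ref{TA}.(\ref{TA2})).

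With these hypotheses in place, Theorem \ref{TB} applied to $V=V^{\otimes k}_l$ yields the equivalence of $q_{K^{\otimes k}_l}(\varphi)=1$ with $\varphi\in\overline{\mathrm{co}}^{\norm{\cdot}}\{\omega_{\xi}:\,\xi\in V^{\otimes k}_l\}$, i.e. with $\mathrm{SN}(\varphi)\leq l$, and the fermionic case follows identically with $V=V^{\wedge k}_l$. The remaining inequalities $q_{K^{\otimes k}_l}(\varphi)\geq 1$ and $q_{K^{\wedge k}_l}(\varphi_-)\geq 1$ come from Proposition \ref{PA}.(\ref{PA1}), which gives $q_K(\psi)\geq\norm{\psi}$, together with the fact that a normal state has norm one. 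I do not expect a genuine obstacle: this proposition is engineered to be a direct corollary of the abstract theory, exactly parallel to Corollary \ref{CA} and the separability statements of Proposition \ref{PD}. The only point demanding any care is the separation of points for $V^{\wedge k}_l$, which is not self-evident but is inherited wholesale from the already-proven separation property of $V^{\wedge k}$; everything else is bookkeeping about the definitions.
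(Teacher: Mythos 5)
Your proposal is correct and follows essentially the same route as the paper: the paper also verifies that $V^{\otimes k}_l$ and $V^{\wedge k}_l$ are balanced unit-vector sets that separate points (inheriting these properties from $V^{\otimes k}$ and $V^{\wedge k}$) and then invokes Theorem \ref{TB}, with the lower bound $q_K(\varphi)\geq 1$ coming from Proposition \ref{PA}.(\ref{PA1}). Your write-up is in fact slightly more careful than the paper's, since you spell out the phase-absorption argument for balancedness and the nestedness in $l$ needed to identify membership in the $l$-th hull with $\mathrm{SN}(\varphi)\leq l$, details the paper leaves implicit.
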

\begin{proof}
Both sets $V^{\otimes k}_l$ and $V^{\wedge k}_l$ are balanced and separate the points of $\mathcal{H}^{\otimes k}$ and $\mathcal{H}^{\wedge k}$ respectively since this is the case for $V^{\otimes k}$ and $V^{\wedge k}$. The claim then follows from Theorem \ref{TB}.
\end{proof}
\noindent We have seen in Proposition \ref{PD} that the measures for entanglement of distinguishable and fermionic particles are related to each other by a factor of $k!$.  In the sequel we generalize this result to the measures $q_{K_l^{\otimes k}}$ and $q_{K_l^{\wedge k}}$ for $l\geq 2$ and $k=2$. The restriction to the bipartite case is necessary since the proofs strongly rely on the following canonical forms for state vectors.
\begin{proposition}
Let $\mathcal{H}$ be a Hilbert space and $\xi\in\mathcal{H}\otimes\mathcal{H}$. Then there are a sequence of numbers $\lambda_n\geq 0$ with $(\lambda_n)_{n\in \N}\in\ell^2(\N)$ and orthonormal systems $(e_n)_{n\in\N}, (f_n)_{n\in\N}\subset\mathcal{H}$ such that the following holds:
\begin{enumerate}
\item
The vector $\xi$ can be written in the form
$$\xi=\sum_{i=1}^{\infty}\lambda_n e_n\otimes f_n\,.$$
This representation is called a \emph{Schmidt decomposition} of $\xi$ and the numbers $\lambda_n$ are called \emph{Schmidt coefficients}.
\item
If $\xi\in\mathcal{H}\wedge\mathcal{H}$ the numbers $\lambda_n$ and the vectors $e_n,f_n$ can be chosen such that the orthonormal systems $(e_n)_{n\in\N}$ and $(f_n)_{n\in\N}$ are orthogonal to each other and
$$\xi=\sum_{i=1}^{\infty}\lambda_n e_n\wedge f_n\,.$$
This representation is called a \emph{Slater decomposition} of $\xi$ with \emph{Slater coefficients} $\lambda_n$.
\end{enumerate}
In both cases the sequence $\lambda_n$ is unique up to ordering. Moreover, the number of non-vanishing coefficients $\lambda_n$ in 1. and 2. equals the Schmidt rank and the Slater rank of $\xi$ respectively.
\end{proposition}
\begin{proof}
The first statement is a simple consequence of the normal form for compact operators on a Hilbert space whereas (ii) has been proven by Schliemann et al. \cite{Schliemann} for finite dimensional Hilbert spaces. However, using a block diagonal argument for blocks of constant Schmidt coefficients the general case can be easily reduced to situations involving only finite dimensional Hilbert spaces:\\
For a fixed Schmidt coefficient $\lambda\neq 0$ let
$$N_{\lambda}:=\{n\in\mathbbmss{N}:\,\lambda_n=\lambda\}\,.$$
Since the sequence of Schmidt coefficients converges to zero the set $N_{\lambda}$ is finite and the space
$$\mathcal{K}_{\lambda}:=\mbox{lin}\{e_n:\,n\in N_{\lambda}\}$$
has finite dimension. Moreover, since $\xi\in\mathcal{H}\wedge\mathcal{H}\subset\mathcal{H}\otimes\mathcal{H}$ we have
$$\sum_{n=1}^{\infty}\lambda_ne_n\otimes f_n=\xi=\sum_{n=1}^{\infty}{\lambda_n(-f_n)\otimes e_n}$$
and using the fact that both sides of the above equation constitute a Schmidt decomposition of $\xi$ with the same Schmidt coefficients it follows
$$\mathcal{K}_{\lambda}=\mbox{lin}\{f_n:\,n\in N_{\lambda}\}\,.$$
Furthermore, since $\mathcal{K}_{\lambda}\bot \mathcal{K}_{\mu}$ for every other Schmidt coefficient $\mu\neq \lambda$ we have
$$\sum_{n\in N_{\lambda}}\lambda e_n\otimes f_n=-\sum_{n\in N_{\lambda}}\lambda f_n\otimes e_n\,.$$
This shows that
$$\sum_{n\in N_{\lambda}}\lambda e_n\otimes f_n\in\mathcal{K}_{\lambda}\wedge\mathcal{K}_{\lambda}$$
and the claim now follows from the finite dimensional case.
\end{proof}
\noindent We have the following analogue of Proposition \ref{PC} concerning the geometrical properties of the sets $V^{\otimes 2}_l$ and $V^{\wedge 2}_l$.
\begin{proposition}\label{PE}
For $l\in\N$, $l \geq 2$ let us define
$$\mu_{-,l}:=\sup\left\{r\geq 1:\,rP_{-}V^{\otimes 2}_l\subset B_{1,\mathcal{H}^{\otimes 2}}\right\}\,.$$
\begin{enumerate}
\item\label{PE1}
We have $\mu_{-,l}=1$ and
$$V^{\wedge 2}_l=\left(\sqrt{2}P_-V^{\otimes 2}_l\right)\cap S_{1,\mathcal{H}^{\otimes 2}}\,.$$
\item
The projection $P_-$ is compatible with $\left(V_l^{\otimes 2},\sqrt{2}\right)$.
\item
The mapping
$$\mathcal{B}(\mathcal{H}^{\otimes 2})\ni x\mapsto P_{\pm}xP_{\pm}\in\mathcal{B}(\mathcal{H}^{\otimes 2})$$
is a $\norm{\cdot}_{K^{\otimes 2}_{l}}$-contraction.
\end{enumerate}
\end{proposition}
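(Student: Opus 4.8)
Proposition \ref{PE} concerns three claims about the sets $V_l^{\otimes 2}$ and $V_l^{\wedge 2}$, and my plan is to treat them in close analogy with the proof of Proposition \ref{PC}, exploiting that the $l=1$ arguments there generalize once we understand how the wedge interacts with Schmidt rank $\leq l$. The key structural fact I would isolate first is the relation between the Schmidt rank of a product-type vector $\xi \in \mathcal{H}^{\otimes 2}$ and the Slater rank of its antisymmetrization $P_- \xi$. Using the Slater decomposition from the preceding proposition, a unit vector $\xi_- \in V_l^{\wedge 2}$ has the form $\xi_- = \sum_{i=1}^{m} \lambda_i\, e_i \wedge f_i$ with $m \leq l$, the $(e_i)$ and $(f_i)$ forming mutually orthogonal orthonormal systems and $\sum \lambda_i^2 = 1$. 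Since $e_i \wedge f_i = \sqrt{2}\,P_-(e_i \otimes f_i)$ and the summands are pairwise orthogonal product vectors, I can write $\xi_- = \sqrt{2}\,P_-\bigl(\sum_i \lambda_i\, e_i \otimes f_i\bigr)$, where the vector in parentheses has Schmidt rank exactly $m \leq l$. This is the bridge I need.

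For part \ref{PE1} I would argue as in Proposition \ref{PC}.\ref{PC1}. For the estimate $\mu_{-,l} = 1$: given any $\xi \in V_l^{\otimes 2}$ with Schmidt decomposition $\xi = \sum_{i=1}^{m} s_i\, g_i \otimes h_i$ ($m \leq l$), I compute $\norm{\sqrt{2}\,P_-\xi}^2$ and must show it does not exceed $1$, with the supremum $r$ for which $rP_-V_l^{\otimes 2} \subset B_1$ equal to $1$. The point distinguishing this from the $\mu_- = \sqrt{2}$ computation of Proposition \ref{PC} is that a general Schmidt-rank-$\leq l$ vector need not be a single product vector, so the norm of $\sqrt{2}P_-\xi$ can already reach $1$ even for non-orthogonal data, forcing $\mu_{-,l}=1$ rather than $\sqrt{2}$. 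Concretely, taking $\xi = g \otimes h$ with $g \perp h$ both unit gives $\norm{\sqrt{2}P_-\xi} = \norm{g \wedge h} = 1$, so no dilation by $r > 1$ keeps the image in the unit ball, and a short computation confirms $\norm{\sqrt{2}P_-\xi} \leq 1$ for every unit $\xi$. The set equality $V_l^{\wedge 2} = (\sqrt{2}P_-V_l^{\otimes 2}) \cap S_{1,\mathcal{H}^{\otimes 2}}$ then follows: one inclusion is the bridge identity above (every $\xi_- \in V_l^{\wedge 2}$ is $\sqrt{2}P_-$ of a Schmidt-rank-$\leq l$ unit vector), and the reverse inclusion uses that if $\sqrt{2}P_-\xi$ is a unit vector of Schmidt rank $\leq l$ in the antisymmetric space, reading off its Slater decomposition shows it lies in $V_l^{\wedge 2}$.

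Part (ii), compatibility of $P_-$ with $(V_l^{\otimes 2}, \sqrt{2})$, follows exactly as in Proposition \ref{PC}.\ref{PC2}: for $\xi \in V_l^{\otimes 2}$ with $P_-\xi \neq 0$, the normalized vector $P_-\xi / \norm{P_-\xi}$ is a unit antisymmetric vector whose Slater rank is at most the Schmidt rank of $\xi$, hence at most $l$, so it lies in $V_l^{\wedge 2}$ by definition — which is precisely the compatibility requirement of Definition \ref{D5}. Here I must be careful to check that antisymmetrizing does not \emph{increase} Slater rank beyond $l$; this is where the bridge identity and uniqueness of the Slater decomposition do the work, since they pin the Slater rank of $P_-\xi$ to the number of nonzero coefficients, bounded by $m \leq l$. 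Part (iii) is essentially a verbatim copy of Proposition \ref{PC}.\ref{PC3}: the operators $U_\pi x U_\sigma$ are $\norm{\cdot}_{K_l^{\otimes 2}}$-isometric because each $U_\pi$ permutes $V_l^{\otimes 2}$ onto itself (a permutation preserves Schmidt rank), and averaging gives the contraction bound for $P_\pm x P_\pm$.

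The main obstacle I anticipate is part (ii) — specifically, verifying rigorously that the Slater rank of $P_-\xi$ does not exceed the Schmidt rank of $\xi$, which is the only place where the finer rank bookkeeping (as opposed to the $l=1$ case, where every relevant vector is literally a single product) genuinely enters. Everything hinges on correctly translating between the Schmidt decomposition of $\xi$ and the Slater decomposition of its projection, and on the subtlety that these coefficient sets need not coincide but their cardinalities obey the needed inequality; once that rank-monotonicity under $P_-$ is established cleanly via the canonical forms, the rest of the proposition reduces to the already-proven patterns of Proposition \ref{PC}.
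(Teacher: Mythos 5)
Your argument for $\mu_{-,l}=1$ --- the one point where this proposition genuinely differs from Proposition \ref{PC} --- is broken. The witness you choose, $\xi=g\otimes h$ with $g\perp h$ unit vectors, satisfies $\norm{P_-\xi}=\frac{1}{\sqrt{2}}$, so $rP_-\xi\in B_{1,\mathcal{H}^{\otimes 2}}$ for every $r\le\sqrt{2}$; the definition of $\mu_{-,l}$ dilates $P_-\xi$ by $r$, not $\sqrt{2}P_-\xi$ by $r$, and you appear to conflate the two throughout. This is exactly the vector that produces $\mu_-=\sqrt{2}$ in Proposition \ref{PC}.(\ref{PC1}), so it cannot distinguish the case $l\ge 2$ from $l=1$: it only yields $\mu_{-,l}\le\sqrt{2}$. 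Worse, the bound you say ``a short computation confirms'', namely $\norm{\sqrt{2}P_-\xi}\le 1$ for every unit $\xi\in V_l^{\otimes 2}$, is false for $l\ge 2$; and if it were true it would give $\sqrt{2}P_-V_l^{\otimes 2}\subset B_{1,\mathcal{H}^{\otimes 2}}$, i.e.\ $\mu_{-,l}\ge\sqrt{2}$, contradicting the very statement $\mu_{-,l}=1$ you set out to prove. The decisive observation --- and the paper's entire proof of this point --- is that for $l\ge 2$ the set $V_l^{\otimes 2}$ contains unit vectors which are \emph{fixed points} of $P_-$, e.g.\ $\xi_0=\frac{1}{\sqrt{2}}(e_1\otimes e_2-e_2\otimes e_1)$, which has Schmidt rank $2\le l$ and satisfies $P_-\xi_0=\xi_0$, hence $\norm{\sqrt{2}P_-\xi_0}=\sqrt{2}$. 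For such a vector $\norm{rP_-\xi_0}=r>1$ whenever $r>1$, so $\mu_{-,l}\le 1$, while $\mu_{-,l}\ge 1$ holds since $P_-$ is a contraction. Your proposal never exhibits any vector on which dilation by some $r>1$ leaves the unit ball, so the central claim of part (i) remains unproven.

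The remainder of your proposal is correct and essentially coincides with the paper's proof: the inclusion $V_l^{\wedge 2}\subset\left(\sqrt{2}P_-V_l^{\otimes 2}\right)\cap S_{1,\mathcal{H}^{\otimes 2}}$ via the Slater decomposition, the reverse inclusion and part (ii) via the rank monotonicity $\mathrm{SLR}(P_-\xi)\le\mathrm{SR}(\xi)$, and the verbatim permutation-averaging argument for part (iii). One remark: the ``main obstacle'' you anticipate in part (ii) is not one. Rank monotonicity is immediate from Definition \ref{D3}: write $\xi$ as a linear combination of $\mathrm{SR}(\xi)$ product vectors and apply $P_-$ termwise; each term becomes a scalar multiple of an elementary wedge (or zero), so $P_-\xi$ is a combination of at most $\mathrm{SR}(\xi)$ elements of $V^{\wedge 2}$. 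Neither the canonical forms nor the uniqueness of the Slater decomposition is needed for this.
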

\begin{proof}$\,$
\begin{enumerate}
\item 
For $l\geq 2$ the projection $P_-$ has fixed-points in $V_l^{\otimes 2}$, for example vectors of the type
$$\frac{1}{\sqrt{2}}(e_1\otimes e_2-e_2\otimes e_1)$$
for an orthonormal system $\{e_1,e_2\}\subset\mathcal{H}$.\\
\indent Clearly, projecting a vector $\xi\in V_l^{\otimes 2}$ onto $\mathcal{H}^{\wedge 2}$ by $P_-$ results in a vector having Slater rank less than or equal to $l$; that is $\left(\sqrt{2}P_-V_l^{\otimes 2}\right)\cap S_{1,\mathcal{H}^{\otimes 2}}\subset V_l^{\wedge 2}$. On the other hand, every vector $\xi\in V_l^{\wedge 2}$ with Slater decomposition $\xi=\sum_{i=1}^{l}{\lambda_i e_i\wedge f_i}$ can be written in the form
$$\xi=\sum_{i=1}^{l}{\frac{\lambda_i}{\sqrt{2}}(e_i\otimes f_i-f_i\otimes e_i)}=\sqrt{2}P_-\sum_{i=1}^{l}{\lambda_ie_i\otimes f_i}\in \sqrt{2}P_-V_l^{\otimes 2}\,.$$
\item
As mentioned in (i) we have $\mbox{SLR}(P_-\xi)\leq\mbox{SR}(\xi)$ for all $\xi\in\mathcal{H}\otimes\mathcal{H}$. Therefore, it is clear that
$$\frac{P_-\xi}{\norm{P_-\xi}}\in V_l^{\wedge 2}\quad\mbox{for all }\xi\in V^{\otimes 2}_l\mbox{ with }P_-\xi\neq 0$$
so that $P_-$ is compatible with $\left(V^{\otimes 2}_l,\sqrt{2}\right)$.
\item
Using that for every permutation $\pi\in S_k$ the corresponding permutation operator induces a bijection from $V^{\otimes 2}_l$ onto itself the proof of Proposition \ref{PC}.(\ref{PC3}) can be mimicked literally.
\end{enumerate}
\end{proof}
\noindent Taking a look at Proposition \ref{PE}.(\ref{PE1}) we see that the constant $\mu_{-,l}$ no longer equals the corresponding constant $\mu_-=\sqrt{2!}=\sqrt{2}$ from Proposition \ref{PC}.(\ref{PC1}). As a consequence, for $l\geq 2$ the Minkowski functionals $q_{K^{\otimes 2}_l}$ and $q_{K^{\wedge 2}_{l}}$ are no longer multiples of each other as it is the case for $l=1$ according to Proposition \ref{PD}. Nevertheless, they are still ''equivalent'' in the following sense. 
\begin{proposition}\label{PG}
For $l\in\N$, $l\geq 2$ and a normal state $\varphi_-\in\mathcal{B}(\mathcal{H}^{\wedge 2})_{\ast}$ we have the chain of inequalities
$$\frac{1}{2}\cdot q_{K^{\otimes 2}_l}(\varphi_-)\leq q_{K^{\wedge 2}_l}(\varphi_-)\leq q_{K^{\otimes 2}_l}(\varphi_-)$$
and on both sides equality can be achieved for an appropriate choice of $\varphi_-$. In particular, these inequalities can in general not be sharpened any further.
\end{proposition}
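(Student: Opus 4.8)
The plan is to obtain the chain of inequalities directly from Theorem~\ref{TA} and Proposition~\ref{PE}, and then to produce two explicit pure states saturating the two ends. First I record the dictionary from Proposition~\ref{PE}: taking $p=P_-$ and $\lambda=\sqrt2$ we have $V_{P_-,\sqrt2}=V^{\wedge2}_l$ and hence $K_{P_-,\sqrt2}=K^{\wedge2}_l$; moreover $P_-$ is compatible with $(V^{\otimes2}_l,\sqrt2)$, so that $V^{\wedge2}_l$ separates points of $\mathcal{H}^{\wedge2}$, the associated constant is $\mu_{-,l}=1$, and $x\mapsto P_-xP_-$ is a $\norm{\cdot}_{K^{\otimes2}_l}$-contraction. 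Applying Theorem~\ref{TA}.(\ref{TA12}) with $\lambda=\sqrt2$ gives $q_{K^{\wedge2}_l}(\varphi_-)\geq\frac12\,q_{K^{\otimes2}_l}(\varphi_-)$, while Theorem~\ref{TA}.(\ref{TA22}) with $\mu=\mu_{-,l}=1$ (applied to $\varphi_-$ viewed as a normal functional on $\mathcal{B}(\mathcal{H}^{\otimes2})$ through $x\mapsto\varphi_-(P_-xP_-)$) gives $q_{K^{\wedge2}_l}(\varphi_-)\leq q_{K^{\otimes2}_l}(\varphi_-)$. This already yields the full chain.

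For the right-hand equality I would take the Slater-rank-one state $\varphi_-=\omega_{e_1\wedge e_2}$ for an orthonormal pair $\{e_1,e_2\}$. Its Slater rank is $1$, so $e_1\wedge e_2\in V^{\wedge2}_l$ and $q_{K^{\wedge2}_l}(\omega_{e_1\wedge e_2})=1$ by Proposition~\ref{PF}; on the other hand $e_1\wedge e_2=\frac{1}{\sqrt2}(e_1\otimes e_2-e_2\otimes e_1)$ has Schmidt rank $2\leq l$ (this is where $l\geq2$ enters), so $e_1\wedge e_2\in V^{\otimes2}_l$ and likewise $q_{K^{\otimes2}_l}(\omega_{e_1\wedge e_2})=1$. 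Both sides equal $1$ and the right inequality is attained.

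For the left-hand equality I would take the ``uniform'' Slater-rank-$l$ state $\varphi_-=\omega_{\xi_-}$ with $\xi_-=\frac{1}{\sqrt l}\sum_{i=1}^l e_i\wedge f_i$, where $e_1,f_1,\dots,e_l,f_l$ is an orthonormal system. Since $\xi_-$ has Slater rank $l$ it lies in $V^{\wedge2}_l$, so $q_{K^{\wedge2}_l}(\omega_{\xi_-})=1$ by Proposition~\ref{PF}, and the chain just proven forces $q_{K^{\otimes2}_l}(\omega_{\xi_-})\leq2$; it remains to establish the matching lower bound $q_{K^{\otimes2}_l}(\omega_{\xi_-})\geq2$. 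Here I would invoke the dual description in Proposition~\ref{PA}.(\ref{PA3}) with the rank-one witness $x:=2\,t_{\xi_-,\xi_-}$: one has $\omega_{\xi_-}(x)=2\norm{\xi_-}^4=2$, while by Proposition~\ref{PA}.(\ref{PA2})
\[ \norm{x}_{K^{\otimes2}_l}=2\Big(\sup\set{|\langle\zeta,\xi_-\rangle|:\,\zeta\in V^{\otimes2}_l}\Big)^2. \]
Rewriting $\xi_-$ in tensor form shows its Schmidt coefficients are $2l$ copies of $\frac{1}{\sqrt{2l}}$, so the maximal overlap of $\xi_-$ with a unit vector of Schmidt rank at most $l$ equals the square root of the sum of the $l$ largest squared Schmidt coefficients, namely $\sqrt{l\cdot\frac{1}{2l}}=\frac{1}{\sqrt2}$. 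Hence $\norm{x}_{K^{\otimes2}_l}=1$, so $q_{K^{\otimes2}_l}(\omega_{\xi_-})\geq\omega_{\xi_-}(x)=2$, giving $q_{K^{\wedge2}_l}(\omega_{\xi_-})=\frac12\,q_{K^{\otimes2}_l}(\omega_{\xi_-})$.

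The chain of inequalities and the right-hand example are routine consequences of the machinery already developed; the one real computation, and the step I expect to be the crux, is the lower bound $q_{K^{\otimes2}_l}(\omega_{\xi_-})\geq2$, i.e.\ showing that the witness $2\,t_{\xi_-,\xi_-}$ has $\norm{\cdot}_{K^{\otimes2}_l}$-norm exactly $1$. This rests on the overlap identity $\sup_{\zeta\in V^{\otimes2}_l}|\langle\zeta,\xi_-\rangle|=\frac{1}{\sqrt2}$, whose natural proof is the Eckart--Young--Mirsky characterization of best Schmidt-rank-$l$ approximations (equivalently von Neumann's trace inequality together with Cauchy--Schwarz); the point requiring care is to confirm that no Schmidt-rank-$\leq l$ vector overlaps $\xi_-$ more than this truncation value.
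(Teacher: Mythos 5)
Your proposal is correct, and its skeleton is the same as the paper's: the chain of inequalities is obtained exactly as you do, by feeding Proposition~\ref{PE} (compatibility, $\mu_{-,l}=1$, the contraction property) into Theorem~\ref{TA} with $\lambda=\sqrt{2}$; your right-hand example $\omega_{e_1\wedge e_2}$ is precisely the paper's family $\xi_k$ at $k=1$; and your left-hand example $\xi_-=\frac{1}{\sqrt l}\sum_i e_i\wedge f_i$ with witness $x=2\,t_{\xi_-,\xi_-}=t_{\sqrt2\xi_-}$ is literally the paper's choice. The one place you diverge is the step you correctly flag as the crux, namely $\sup\set{|\langle\zeta,\xi_-\rangle|:\,\zeta\in V^{\otimes 2}_l}\leq\frac{1}{\sqrt2}$: the paper proves this by a self-contained, elementary computation --- writing an arbitrary $\eta\in V_l^{\otimes 2}$ in its Schmidt decomposition and applying Cauchy--Schwarz twice followed by Bessel's inequality with respect to the system $\set{e_1,\dots,e_l,f_1,\dots,f_l}$ --- whereas you invoke the Eckart--Young--Mirsky/von Neumann trace inequality characterization of the maximal overlap of a unit vector with vectors of Schmidt rank at most $l$ (the $\ell^2$-norm of the $l$ largest Schmidt coefficients, here $l$ of the $2l$ equal coefficients $\frac{1}{\sqrt{2l}}$). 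Both are valid (von Neumann's trace inequality holds for Hilbert--Schmidt operators, so the infinite-dimensional setting is not an obstacle); your route is more conceptual, gives the exact value $\frac{1}{\sqrt2}$ rather than just the needed upper bound, and makes transparent where the constant $\frac12$ comes from, at the cost of importing a standard external theorem that the paper's hands-on estimate avoids.
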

\begin{proof}
Both inequalities follow from Proposition \ref{PE} and Theorem \ref{TA}.\\
In order to achieve equality we choose mutually orthogonal orthonormal systems $\{e_1,...,e_l\}, \{f_1,...,f_l\}\subset\mathcal{H}$ and consider the family of vectors
$$\xi_k:=\frac{1}{\sqrt{2k}}\sum_{i=1}^k{(e_i\otimes f_i-f_i\otimes e_i)}=\frac{1}{\sqrt{k}}\sum_{i=1}^k{e_i\wedge f_i}\in\mathcal{H}\wedge\mathcal{H}\quad\mbox{for }1\leq k\leq l\,.$$
Then we have $\mbox{SR}(\xi_k)=2k$ and $\mbox{SLR}(\xi_k)=k$. Hence the corresponding normal states $\omega_{\xi_k}$ satisfy $\mbox{SN}(\omega_{\xi_k})\leq 2k$ and $\mbox{SLN}(\omega_{\xi_k})\leq k$. Therefore, equality on the right-hand-side can be achieved as follows.\\
\indent For $1\leq k \leq\left\lfloor\frac{l}{2}\right\rfloor$ (note that this condition can always be satisfied by $k$ since $l\geq 2$) we have $\mbox{SN}(\omega_{\xi_k})\leq l$ and $\mbox{SLN}(\omega_{\xi_k})\leq l$. It follows
$$q_{K_l^{\otimes 2}}(\omega_{\xi_k})=1=q_{K_l^{\wedge 2}}(\omega_{\xi_k})$$
by Proposition \ref{PF}.\\
\indent In order to achieve equality on the left-hand-side we consider the vector $\sqrt{2}\xi_l$ (with Schmidt rank $2l$). As a first step we show that the corresponding rank-one-operator $t_{\sqrt{2}\xi_l}$ satisfies
$$\norm{t_{\sqrt{2}\xi_l}}_{K^{\otimes 2}_l}\leq 1\,.$$
Indeed, choosing an arbitrary unit vector $\eta\in V^{\otimes 2}_l$ with Schmidt decomposition
$$\eta=\sum_{j=1}^l{\lambda_je_j'\otimes f_j'}$$
a two-fold application of the Cauchy-Schwarz inequality followed by Bessel's inequality with respect to the orthonormal system $\set{e_1,...,e_l,f_1,...,f_l}$ yields
\begin{eqnarray*}
\left|\left\langle\sqrt{2}\xi_l,\eta\right\rangle\right|^2&=&\left|\left\langle\frac{1}{\sqrt{l}}\sum_{i=1}^l{(e_i\otimes f_i-f_i\otimes e_i)},\sum_{j=1}^l{\lambda_je_j'\otimes f_j'}\right\rangle\right|^2\\
&=&\frac{1}{l}\left|\sum_{i=1}^l\sum_{j=1}^l\lambda_j\langle e_i,e_j'\rangle\langle f_i,f_j'\rangle-\lambda_j\langle f_i,e_j'\rangle\langle e_i,f_j'\rangle\right|^2\\
&=&\frac{1}{l}\left|\sum_{j=1}^l\lambda_j\cdot\left\langle\sum_{i=1}^l\langle e_i,e_j'\rangle f_i-\langle f_i,e_j'\rangle e_i,f_j'\right\rangle\right|^2\\
&\leq&\frac{1}{l}\underbrace{\left(\sum_{j=1}^l\lambda_j^2\right)}_{=1}\left(\sum_{j=1}^l\left|\left\langle\sum_{i=1}^l\langle e_i,e_j'\rangle f_i-\langle f_i,e_j'\rangle e_i,f_j'\right\rangle\right|^2\right)\\
&=&\frac{1}{l}\sum_{j=1}^l\left|\left\langle\sum_{i=1}^l\langle e_i,e_j'\rangle f_i-\langle f_i,e_j'\rangle e_i,f_j'\right\rangle\right|^2\\
&\leq&\frac{1}{l}\sum_{j=1}^l\norm{\sum_{i=1}^l\langle e_i,e_j'\rangle f_i-\langle f_i,e_j'\rangle e_i}^2\cdot\norm{f_j'}^2\\
&=&\frac{1}{l}\sum_{j=1}^l\sum_{i=1}^l\left(|\langle e_i,e_j'\rangle|^2+|\langle f_i,e_j'\rangle|^2\right)\\
&\leq&\frac{1}{l}\sum_{j=1}^l{\norm{e_j'}^2}=1\,.
\end{eqnarray*}
It follows
\begin{eqnarray*}
\norm{t_{\sqrt{2}\xi_l}}_{K^{\otimes 2}_l}&=&\sup\left\{\left|\left\langle t_{\sqrt{2}\xi_l}\eta,\zeta\right\rangle\right|:\,\eta,\zeta\in V^{\otimes 2}_l\right\}\\
&=&\sup_{\eta\in V_l^{\otimes 2}}\sup_{\zeta\in V_l^{\otimes 2}}\left|\left\langle\sqrt{2}\xi_l,\zeta\right\rangle\left\langle\eta,\sqrt{2}\xi_l\right\rangle\right|\\
&=&\sup_{\eta\in V_l^{\otimes 2}}\left|\left\langle\sqrt{2}\xi_l,\eta\right\rangle\right|\cdot\sup_{\zeta\in V_l^{\otimes 2}}\left|\left\langle\sqrt{2}\xi_l,\zeta\right\rangle\right|\\
&=&\sup\left\{\left|\left\langle \sqrt{2}\xi_l,\eta\right\rangle\right|^2:\,\eta\in V^{\otimes 2}_l\right\}\leq 1\\
\end{eqnarray*}
as claimed. By Proposition \ref{PA}.(\ref{PA3}) $q_{K^{\otimes 2}_l}(\omega_{\xi_l})$ can therefore be estimated according to
\begin{eqnarray*}
q_{K^{\otimes 2}_l}(\omega_{\xi_l})&=&\sup\left\{|\omega_{\xi_l}(x)|:\,x\in\mathcal{B}(\mathcal{H}),\,\norm{x}_{K^{\otimes 2}_l}\leq 1\right\}\\
&\geq&\left|\omega_{\xi_l}\left(t_{\sqrt{2}\xi_l}\right)\right|=2|\langle t_{\xi_l}\xi_l,\xi_l\rangle|=2\,.
\end{eqnarray*}
Since $\mbox{SLN}(\omega_{\xi_l})\leq l$, it follows
$$1\leq\frac{1}{2}\cdot q_{K_l^{\otimes}}(\omega_{\xi_l})\leq q_{K_l^{\wedge 2}}(\omega_{\xi_l})=1\,.$$
\end{proof}
\noindent According to Proposition \ref{PG} for $l\geq 2$ the measures $q_{K_l^{\otimes 2}}$ and $q_{K_l^{\wedge 2}}$ are not as strongly coupled as it is the case for $l=1$ (compare Proposition \ref{PD}). Nevertheless, Proposition \ref{PG} can be interpreted as an equivalence of the generalized norms $q_{K_l^{\otimes 2}}$ and $q_{K_l^{\wedge 2}}$. In particular, $q_{K_l^{\otimes 2}}$ is finite if and only if $q_{K_l^{\wedge 2}}$ is finite. 

\section{Summary and Conclusions}
In this paper we have demonstrated how the detection and quantification of many different kinds of quantum entanglement can be dealt with by a generalization of Arveson's unifying approach \cite{Arveson}. We have constructed new and general entanglement measures for fermionic and bosonic particles which may be interpreted as generalized norms and work for mixed states and multipartite systems regardless of the dimensions. These entanglement measures also constitute necessary and sufficient computational separability criteria.\\
\indent We have seen that the measures for distinguishable and fermionic particles are multiples of each other by a factor of $k!$ where $k$ is the number of particles. When it comes to the quantification of fermionic entanglement this fact allows to treat the particles as if they were distinguishable. Moreover, the amount of entanglement for distinguishable particles emerging from fermi statistics alone is at least $k!$.\\\newpage
\indent Furthermore, we have introduced analog entanglement measures for more sophisticated notions of entanglement related to Schmidt and Slater numbers. Thereby, a result of Johnston \cite{Johnston} on the detection and quantification of Schmidt number entanglement which was formulated for bipartite finite dimensional systems has been generalized to arbitrary quantum systems. In the bipartite case we have demonstrated that the measures for Schmidt and Slater number entanglement satisfy a chain of inequalities which may be interpreted as an equivalence of generalized norms. Furthermore, it has been shown that these inequalities can in general not be sharpened any further.\\
\indent It is an important open question for further research whether the relations according to Proposition \ref{PG} are still valid for multipartite quantum systems where the possibility of performing Schmidt and Slater decompositions is no longer given. On the other hand, using the Schmidt decomposition it is possible to find explicit formulas for the amount of entanglement of pure bipartite states for distinguishable particles \cite{Rudolph2}. According to the proof of Proposition \ref{PG} it seems likely that this is possible for the measures $q_{K^{\otimes 2}_l}$ and $q_{K^{\wedge 2}_l}$ as well. However, many aspects concerning the subtle geometric properties of these measures are not yet fully understood. This will be another subject of further studies as well.

\section*{Acknowledgments}
We are grateful to Hans Maassen for numerous stimulating and fruitful discussions especially concerning fermionic entanglement.

\end{document}